\documentclass[preprint,floatfix,superscriptaddress,11pt,nofootinbib]{revtex4-2}
\usepackage{float}
\makeatletter
\let\newfloat\newfloat@ltx
\makeatother
\usepackage{algorithm,algpseudocode,amssymb,amsthm,graphicx,mathtools}
\usepackage[T1]{fontenc}

\usepackage{color}
\usepackage{colortbl}
\usepackage{nicefrac}
\usepackage[hypertexnames=false,colorlinks,urlcolor=blue,citecolor=blue]{hyperref}
\usepackage{natbib}
\usepackage[capitalize]{cleveref}
\usepackage{dsfont}

\makeatletter
\DeclareRobustCommand{\cev}[1]{%
  \mathpalette\do@cev{#1}%
}
\makeatother

\newcommand{\eg}{\emph{e.g.,~}}
\newcommand{\ie}{\emph{i.e.,~}}
\newcommand{\etal}{\emph{et al.}}

\newcommand{\nn}{\nonumber}

\newcommand{\dg}{^\dagger}
\newcommand{\bra}[1]{\langle #1|}
\newcommand{\ket}[1]{|#1\rangle}
\newcommand{\tr}{{\rm tr}}
\newcommand{\expect}[1]{\langle #1\rangle}

\newcommand{\ketbra}[2]{\ket{#1}\bra{#2}}

\newcommand{\norm}[1]{\left \lVert #1 \right \rVert}

\newcommand{\bR}{{\bf R}}
\newcommand{\bmu}{{\boldsymbol{\mu}}}
\newcommand{\bsigma}{{\boldsymbol{\sigma}}}
\newcommand{\bSigma}{{\boldsymbol{\Sigma}}}
\newcommand{\bgamma}{{\boldsymbol{\gamma}}}
\newcommand{\bo}{{\bf p}}

\newtheorem{theorem}{Theorem}
\newtheorem{lemma}[theorem]{Lemma}

\newtheorem{manualtheoreminner}{Theorem}
\newenvironment{manualtheorem}[1]{%
  \IfBlankTF{#1}
    {\renewcommand{\themanualtheoreminner}{\unskip}}
    {\renewcommand\themanualtheoreminner{#1}}%
  \manualtheoreminner
}{\endmanualtheoreminner}

\makeatletter
\newcommand{\vast}{\bBigg@{4}}
\newcommand{\Vast}{\bBigg@{5}}
\makeatother

\newcommand{\snlca}{Quantum Algorithms and Applications Collaboratory, Sandia National Laboratories, Livermore, CA 94550, USA}
\newcommand{\snlnm}{Quantum Algorithms and Applications Collaboratory, Sandia National Laboratories, Albuquerque, NM, 87123, USA}
\newcommand{\unm}{Center for Quantum Information and Control (CQuIC), Department of Physics and Astronomy, University of New Mexico, Albuquerque, NM 87131, USA}
\newcommand{\nus}{Department of Computer Science, School of Computing, National University of Singapore, Singapore 117417, Republic of Singapore}

\newcommand{\papertitle}{Learning Gaussian optical states with quantum computers}

\begin{document}
\title{\papertitle}
\author{Spencer Dimitroff}
\email{sddimit@sandia.gov}
\affiliation{\unm}
\affiliation{\snlca}
\author{John Kallaugher}
\affiliation{\nus}
\author{Ashe Miller}
\affiliation{\snlnm}
\author{Mohan Sarovar}
\email{mnsarov@sandia.gov}
\affiliation{\snlca}
\date{\today }

\begin{abstract}
Recent results have established dramatic advantages in learning properties of quantum states when a quantum computer is available to process or jointly measure multiple copies of the unknown quantum state. Learning tasks can be accomplished with exponentially fewer copies of the state when compared to optimized classical learning strategies that are restricted to measuring one copy of the state at a time. While these results were established in abstract settings and for artificial learning tasks, they motivate the application of quantum computers to imaging and sensing of weak electromagnetic fields since these settings are ultimately concerned with the learning of unknown quantum states. In this work we apply these new results in quantum learning to the problem of learning Gaussian states of the electromagnetic field, which are germane since they describe most fields used in imaging and sensing. In order to connect with quantum learning theory, we consider the transduction of an $n$-mode Gaussian state into a register of qubits on a quantum computer followed by optimized measurements on these qubits to extract the parameters defining the original Gaussian state. We rigorously bound the number of copies of the Gaussian state required to achieve worst-case additive error in parameter estimates. The scaling of this bound with $n$ is exponentially better than \emph{na\"ive} strategies for characterizing Gaussian states and matches recently derived bounds for characterization of Gaussian states using continuous-variable (CV) classical shadows. In addition, our bound has a polynomially better dependence on the energy of the multimode Gaussian state compared to the CV shadows protocol.
\end{abstract}

\maketitle

\section{Introduction}
Quantum computers have the potential to be a game-changing technology for a variety of fields. While the impact of quantum computing on computational tasks such as electronic structure calculations is widely recognized, recent results have established that having a quantum computer can also radically impact the task of learning about physical states and processes \cite{Chen2021-qi,Aharonov2022-ru,Huang2022-tz,King2024-lu}. 
In this setting, the advantage that quantum computers present is not in the time complexity for a task, but rather in \emph{sample complexity}. For example, Refs. \cite{Chen2021-qi,Huang2022-tz} show that by using a quantum computer, learning of quantum states and processes can be done to fixed error with, at times, exponentially fewer copies of the state or uses of the quantum process, than any classical strategy. These quantum advantages typically rely on the problem having some structure such as the need to simultaneously resolve non-commuting observables that makes joint (\ie multicopy) measurements advantageous over single-copy, classical measurements. This need for such multicopy or nonlocal measurements naturally connects to quantum computation where these types of complex operations can be performed.

The \emph{quantum learning} results mentioned above were established in an artificial setting where the source of the unknown quantum states and processes were left open and for fairly contrived learning tasks. However, they strongly motivate the application of this quantum learning theory to imaging and sensing applications since many important tasks in these areas, especially in the weak-field regime, are concerned with learning unknown quantum states or processes. 

With this motivation, we recently formulated the concept of \emph{quantum computational imaging and sensing} (QCIS) \cite{Sarovar_2023}, which captures the idea of performing imaging and sensing tasks on weak electromagnetic fields by first transducing quantum information from these fields into registers of qubits and then performing a quantum computation on them. Applying this concept to the task of decoding classical coherent communication, we showed that one could obtain a quantum advantage by joint coherent decoding (enabled by a quantum computation) of multiple communication pulses \cite{crossmanQuantumComputerenabledReceivers2024}. This strategy results in a lower probability of decoding error than possible with any classical receiver, even in the presence of noise in the transduction and the quantum computation. See \cite{Delaney2022-dr,smith2025quantumprocessingassistedclassicalcommunications} for related work on the same application.

In this paper we consider the application of QCIS to the learning of Gaussian states of the electromagnetic (EM) field. This setting significantly broadens the range of applications of the concept since almost all imaging and sensing applications involve Gaussian states of the EM field. Gaussian states capture common sources of light, such as thermal, coherent, partially coherent and chaotic light, as well exotic states such as squeezed states. The task we consider is the complete characterization of the Gaussian state of multiple ($n$) modes of an optical field\footnote{In this paper we use \emph{optical field/state} as a shorthand for electromagnetic field/state. This does not imply an assumption that the state of interest is in the optical frequencies.}. 
As explained below, an $n$-mode Gaussian state is completely specified by $2n^2 + 3n$ real numbers.
Some imaging and sensing tasks might require less information (\eg only the $2n$ mean values of the quadratures of the $n$ modes) but the task we consider subsumes any imaging and sensing task because if the entire $n$ mode state is characterized, any information encoded in it can be computed. However, we expect that if an application requires significantly less information about the state than all $2n^2 + 3n$ degrees of freedom, QCIS will present minimal advantage since classical measurement schemes could be optimized to target the few parameters of interest. We note that there is no dearth of applications where the whole Gaussian state needs to be characterized (\eg see the satellite imaging application studied in Refs. \cite{koseQuantumenhancedPassiveRemote2022,koseSuperresolutionImagingMultiparameter2023}).

Our QCIS scheme for learning $n$-mode Gaussian states proceeds by independently transducing each mode of the state into a qubit through the Jaynes-Cummings (JC) interaction. Then we show that by measuring one- and two-qubit Pauli observables on the $n$-qubit register, one can recover estimates of all Gaussian moments of the original optical state. The main result of the paper is summarized by the following theorem:

\begin{theorem}\label{main_result}
With probability $1-\delta$, optical-to-qubit transduction followed by one- and two-qubit Pauli measurements can estimate each of the $M=2n^2+3n$ means and covariances of an $n$-mode optical state to additive error $\epsilon$ using $T=\mathcal{O}\Big{(}\frac{E_{\rm max}^2 \log(n/\delta) \lceil \log n \rceil}{\epsilon^2}\Big{)}$ copies, where $E_{\rm max}$ is the maximum energy per mode.
\end{theorem}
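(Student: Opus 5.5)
Our plan has three parts: analyze the transduction one mode at a time, turn the $M$ moments into linear functions of one- and two-qubit Pauli expectations, and then count samples with a median-of-means or Hoeffding bound combined with a union bound.

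\textbf{Encoding the moments.} Each mode $a_j$ interacts with a qubit prepared in a fixed state (say $\ket{0}$ or $\ket{+}$) through the Jaynes--Cummings Hamiltonian $g(a_j\sigma_j^+ + a_j^\dagger\sigma_j^-)$ for a short time $\tau$. In the Heisenberg picture we expand the qubit Paulis to low order in $g\tau$:
\begin{itemize}
\item $\langle \sigma_j^x\rangle$ and $\langle\sigma_j^y\rangle$ are linear in $\langle a_j\rangle$ and $\langle a_j^\dagger\rangle$;
\item $\langle\sigma_j^z\rangle$ carries $\langle a_j^\dagger a_j\rangle$;
\item two-qubit correlators such as $\langle\sigma_j^x\sigma_k^y\rangle$ carry $\langle a_j a_k\rangle$ and $\langle a_j^\dagger a_k\rangle$.
\end{itemize}
Varying the initial qubit states and the measured Pauli pairs should let us invert linearly for all $2n$ means and $2n^2+n$ second moments. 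We then obtain the covariances by subtracting products of means.

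\textbf{Controlling the truncation error.} The coefficients in this expansion are of order $(g\tau)^k$. The higher-order terms involve moments of $a_j^\dagger a_j$, which for Gaussian states are bounded by powers of $E_{\max}$. We therefore choose $g\tau \sim 1/\sqrt{E_{\max}}$, or work with exact JC Rabi formulas. Inverting then amplifies the Pauli estimation error by a factor of order $E_{\max}$, so each moment must be estimated from Paulis to precision $\epsilon/E_{\max}$. This gives the $E_{\max}^2/\epsilon^2$ factor. I expect this to be the main obstacle: bounding the systematic bias uniformly over all Gaussian states with energy at most $E_{\max}$ while keeping the amplification at $O(E_{\max})$ rather than a higher power. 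The error from forming covariances out of means, of size $|\mu|\cdot$error $\lesssim \sqrt{E_{\max}}\,\epsilon$, also has to be absorbed.

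\textbf{Sample counting.} We do not measure each of the $O(n^2)$ Pauli pairs separately; we parallelize instead. We build a family of $\lceil\log n\rceil$ (up to constants) perfect matchings, or rather a hashing scheme over qubit pairs, so that every pair $(j,k)$ is measured together in some round. One concrete choice is to split the qubits by the bits of their indices, with a constant number of local basis settings per split. Each round measures all qubits simultaneously in product Pauli bases, so one copy yields estimates for many pairs at once. Allocating $O(E_{\max}^2\log(n/\delta)/\epsilon^2)$ copies to each of the $O(\lceil\log n\rceil)$ settings, Hoeffding's inequality gives failure probability at most $\delta/n^2$ per Pauli observable. The variables are bounded in $[-1,1]$, so no variance argument is needed. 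A union bound over the $O(n^2)$ observables then gives total failure probability at most $\delta$. Multiplying out gives $T=O(E_{\max}^2\log(n/\delta)\lceil\log n\rceil/\epsilon^2)$.
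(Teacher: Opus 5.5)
Your sample-counting step is sound, and it is a legitimate variant of the paper's. Deterministic bit-split product-Pauli settings, combined with Hoeffding and a union bound, put the $\lceil\log n\rceil$ factor into the measurement settings. The paper instead measures all one- and two-qubit Paulis with classical shadows, and picks up $\lceil\log_2 n\rceil+1$ from the initial product states of Lemma~\ref{n-mode initial states}.

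The genuine gap is the step you yourself flag as the main obstacle, and it is the actual content of the theorem. Choosing $g\tau=\Theta(1/\sqrt{E_{\rm max}})$ makes the Taylor tail converge, but it does not make the bias of a \emph{linear} inversion small compared with $\epsilon$. For example, from $\ket{g}$ one has exactly
$\expect{Z}=-\expect{\cos(2g\tau\sqrt{a^\dagger a})}=-1+2(g\tau)^2\expect{a^\dagger a}-\tfrac{2}{3}(g\tau)^4\expect{(a^\dagger a)^2}+\dots$.
So $(\expect{Z}+1)/(2(g\tau)^2)$ is off by about $\tfrac13(g\tau)^2\expect{(a^\dagger a)^2}$. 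For a thermal state with $\bar n\sim E_{\rm max}$ and $g\tau=1/(C\sqrt{E_{\rm max}})$, this is $\Theta(E_{\rm max}/C^2)$, independent of $\epsilon$. Removing it by shrinking $g\tau$ requires $g\tau\lesssim\sqrt{\epsilon}/E_{\rm max}$. The amplification $1/(g\tau)^2$ then forces $\epsilon'\lesssim\epsilon^2/E_{\rm max}^2$, and your count becomes of order $E_{\rm max}^4/\epsilon^4$ rather than $E_{\rm max}^2/\epsilon^2$. Appealing to exact Rabi formulas does not fix this by itself. You would still need to show that the nonlinear map from $(\bmu,\bsigma)$ to Pauli expectations can be inverted with error amplification $O(\sqrt{E_{\rm max}})$ on means and $O(E_{\rm max})$ on second moments, uniformly over the energy-bounded set. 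That argument is exactly what is missing.

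The paper closes this gap by exploiting Gaussianity. Isserlis' theorem expresses every higher moment in the Taylor tail through the same $\bmu,\bsigma$, so $\bo=\mathcal{M}\bgamma+f(\bgamma)$ with $f$ \emph{known}. The system is then solved by the fixed-point iteration $\widetilde\bgamma_r=\mathcal{M}^{-1}(\widetilde\bo-f(\widetilde\bgamma_{r-1}))$ of Algorithm~\ref{alg:alg}. Two estimates drive it:
\begin{itemize}
\item the block norms $\norm{\mathcal{A}}_{\infty,\infty}=O(1/g\tau)$ and $\norm{\mathcal{B}}_{\infty,\infty}=O(1/(g\tau)^2)$ of $\mathcal{M}^{-1}$ (Lemma~\ref{lemma1});
\item the Lipschitz bound $\norm{f(\widetilde\bgamma)-f(\bgamma)}_\infty=O\big((g\tau)^3(E_{\rm max}\norm{\widetilde\bmu-\bmu}_\infty+\sqrt{E_{\rm max}}\norm{\widetilde\bsigma-\bsigma}_\infty)\big)$ on the region $\norm{\widetilde\bmu}_\infty<2\sqrt{2E_{\rm max}}$, $\norm{\widetilde\bsigma}_\infty<4E_{\rm max}$ (Lemma~\ref{lemma2}).
\end{itemize}
Take $g\tau=1/(C\sqrt{E_{\rm max}})$ and $\epsilon'=\Theta(\epsilon/E_{\rm max})$, and define the weighted error $w_r=\max(\sqrt{E_{\rm max}}\norm{\widetilde\bmu_r-\bmu}_\infty,\norm{\widetilde\bsigma_r-\bsigma}_\infty)$. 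It satisfies $w_r\le\epsilon/4+w_{r-1}/2$. Hence $R=O(\log(E_{\rm max}/\epsilon))$ rounds leave only the statistical error, and $g\tau$ never has to depend on $\epsilon$ (Theorem~\ref{epsilon_scaling}). The same weighting also settles your centring concern. Means come out with error $\epsilon/\sqrt{E_{\rm max}}$, so subtracting products of means costs only $O(\epsilon)$.
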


The logarithmic dependence of the sample complexity on the number of modes, $n$, is remarkable. Many conventional optical strategies to fully characterize an $n$-mode Gaussian state require ${\rm poly}(n)$ copies of the state (\eg \cite{Kumar2020-mc}). Recent formulations of continuous-variable classical shadows \cite{gandhariPrecisionBoundsContinuousVariable2024,beckerClassicalShadowTomography2024} can also characterize an $n$-mode Gaussian state with $O(\log n)$ copies and conventional optical measurements, however, as we show in \cref{sec:CV shadows} their dependence on $E_{\rm \max}$ is polynomially worse. To achieve the sample complexity in \cref{main_result} we control the duration of the JC interaction to be in the perturbative regime, use classical shadows \cite{huangPredictingManyProperties2020,Huang2021-xl,huangProvablyEfficientMachine2022} to efficiently measure the required Pauli observables, and develop an iterative algorithm to recover estimates of the Gaussian moments with controlled error from the Pauli measurements. 

There have been several recent publications related to the central themes of this work, and we summarize these here. Firstly, several authors have considered the problem of learning and tomography of continuous-variable (bosonic) states \cite{gandhariPrecisionBoundsContinuousVariable2024,beckerClassicalShadowTomography2024,meleLearningQuantumStates2025,bittelOptimalEstimatesTrace2025,bittelEnergyindependentTomographyGaussian2025}. We mentioned the CV classical shadows results \cite{gandhariPrecisionBoundsContinuousVariable2024,beckerClassicalShadowTomography2024} above and will have a direct comparison of our work and this work in \cref{sec:CV shadows}. The other stream of recent work \cite{meleLearningQuantumStates2025,bittelOptimalEstimatesTrace2025,bittelEnergyindependentTomographyGaussian2025} considers the sample complexity of learning energy-constrained Gaussian states using optical measurements, but seeks to bound the error of the learned density matrix in terms of the trace distance. This is a much more stringent goal than ours, namely, to learn all Gaussian moments to some precision, which is more relevant to imaging and sensing applications. Our less stringent requirement explains the smaller sample complexity for the task considered in this paper when compared to Refs. \cite{meleLearningQuantumStates2025,bittelOptimalEstimatesTrace2025,bittelEnergyindependentTomographyGaussian2025}. 

Turning to work focused on applications of quantum computing to sensing and imaging, we first refer the reader to the review paper by Khan \etal\ \cite{khan2025quantumcomputationalsensingadvantage}, which covers many recent results in this area. 
We note that none of these results consider the problem we study in this paper of learning Gaussian states of the EM field explicitly, and in fact, most assume that the signal of interest is already encoded in the state of some qubits. Another direction of work has focused on increasing the resolution of astrophysical imaging, motivated by the seminal work by Gottesman \etal\ \cite{PhysRevLett.109.070503} that showed that the baseline of optical telescopes could be increased using quantum repeaters and teleportation with distributed entangled photons. Recent improvements of this idea reduce the amount of entanglement distribution needed by instead having small quantum computers at each telescope that perform coherent computations to compress the quantum information transduced from light \cite{Khabiboulline2019-gc,Khabiboulline2019-sq}. The state of the EM field measured in these works is thermal and therefore Gaussian. However, the problem they consider is more akin to single or few parameter estimation and not learning of the full Gaussian state. 
Finally, recent work on enhanced optical imaging by quantum computation by Mokeev \etal\ \cite{Mokeev2026-ia} is the most closely aligned with the work we present here since they consider learning properties of a multimode optical field by using coherent processing with a quantum computer, although the input state considered is non-Gaussian.

The rest of the paper is organized as follows. In \cref{sec:learning} we formally introduce the problem. In \cref{sec:transduction} we examine the JC transduction process as applied to carefully chosen initial states. We show how to recover all Gaussian parameters from low-weight Pauli observables of the transduced states, provided that higher-order components of the transduction map (that is, the order $>2$ terms in the Taylor expansion of the unitary time evolution) can be neglected. In \cref{sec:complexity} we use this observation to develop an iterative algorithm for recovering these parameters by performing increasingly accurate corrections for the aforementioned non-linear component of the transduction, and bound the sample complexity of this algorithm. \cref{sec:CV shadows} compares this sample complexity against the sample complexity of Gaussian state characterization using CV classical shadows. \cref{sec:conclusion} concludes with a discussion of our findings, and opportunities for extensions and future work.

\section{Gaussian state learning}\label{sec:learning}
Consider an $n$-mode continuous-variable bosonic state, $\rho_f$, which has $2n$ quadrature degrees of freedom, $\bR =(Q_1,P_1,...,Q_n,P_n)^T$. A Gaussian state \cite{ferraro2005gaussianstatescontinuousvariable,RevModPhys.84.621,adessoContinuousVariableQuantum2014} is completely determined by the first- and second-order moments of these quadratures, or equivalently, the means, ${\bmu}_i=\tr(\rho_f{\bf R}_i)$, and covariance matrix elements ${\bf\Sigma}_{jk}=\tr(\rho_f\{{\bf R}_j-\bmu_j,{\bf R}_k-\bmu_k\})/2$. In the following, we often collect the $2n^2 + n$ unique covariance matrix elements in a column vector $\bsigma$ and group them with the means to form the $2n^2+3n$-dimensional vector ${\bgamma} = \left(\begin{array}{c}
         \bmu \\
         \bsigma 
    \end{array}\right)$ that contains all parameters for the $n$-mode Gaussian state. We assume the second-order moments collected in $\bsigma$ are not the centralized moments that define $\bSigma$, but the non-centralized second-order moments, $\tr(\rho_f{\{{\bf R}_j,{\bf R}_k}\})/2$.

For an arbitrary multimode optical state $\rho_f$, we set the energy constraint of a single mode as
\begin{equation}
    \tr(\rho E_j)=\frac{1}{2}\tr(\rho Q_j^2)+\frac{1}{2}\tr(\rho P_j^2)\leq E_{\rm max}
\end{equation}
such that the total energy is bounded by $nE_{\rm max}$. Since both terms $\tr(\rho Q_j^2)$ and $\tr(\rho P_j^2)$ are non-negative, $\tr(\rho {\bf R}_j^2)\leq 2E_{\rm max}$, for all $j$. This implies via Cauchy-Schwarz that
\begin{equation} \label{covariance_bound}
    |\tr(\rho {\bf R}_j {\bf R}_k)|\leq\sqrt{\tr(\rho {\bf R}_j^2)}\sqrt{\tr(\rho {\bf R}_k^2)}\leq 2E_{\rm max}
\end{equation}
and
\begin{equation} \label{mean_bound}
    |\tr(\rho {\bf R}_j)|\leq\sqrt{\tr(\rho {\bf R}_j^2)}\sqrt{\tr(\rho \mathds{1}^2)}\leq \sqrt{2E_{\rm max}}
\end{equation}
for all $j,k$. Therefore, we can bound the first and second moment vectors by $\norm{{\bmu}}_\infty\leq\sqrt{2E_{\rm max}}$ and $\norm{{\bsigma}}_\infty\leq2E_{\rm max}$, respectively.

The objective of this paper is to establish a sample complexity bound for learning all parameters of an $n$-mode optical Gaussian state via a QCIS approach. This is done in two parts, each being related to one of the two steps of the QCIS scheme. The first part is to understand the optical-to-qubit transduction and show that all Gaussian parameters, the $2n^2+3n$ means and covariances of the $n$-mode state, can be estimated from qubit measurements after transduction\footnote{Throughout this paper we use the terms Gaussian parameters, means and covariances, and first and second moments interchangeably.}. The second part is to determine, given the QCIS scheme already established, how many copies of the state need to be measured to estimate all parameters to within $\epsilon$-error of the true parameter values with a total failure probability bounded by $\delta$. This objective can be stated as
\begin{equation} \label{task}
    \Pr(\norm{\widetilde{\bgamma} - \bgamma}_\infty \leq \epsilon) \geq 1-\delta,
\end{equation}
where $\bgamma$ ($\widetilde{\bgamma}$) are the true (estimated) Gaussian parameters. A similar yet stricter problem of estimating the $n$-mode density matrix for fixed trace distance while restricted to single-copy all-optical Gaussian operations and measurements has been considered in \cite{meleLearningQuantumStates2025,bittelOptimalEstimatesTrace2025,bittelEnergyindependentTomographyGaussian2025}. 

\section{Optical-to-qubit transduction}
\label{sec:transduction}
From a theoretical perspective, the key to making QCIS possible is understanding how the information describing the optical state is transferred to the qubit domain via transduction. Here we consider transducing the $j$-th mode of the optical state into a corresponding $j$-th qubit (see \cref{fig: QCIS_fig}). The transduction  for the $j$-th mode-qubit pair is modeled by a local JC interaction with zero detuning, $H_{JC,j}=g_j(a_j\sigma_j^{+}+a_j^{\dagger}\sigma_j^{-})=\bar{g}_j(Q_jX_j-P_jY_j)$ for quadrature operators $Q_j,P_j$, Pauli operators $X_j,Y_j$, and interaction strength $g_j=\sqrt{2}\bar{g}_j$ 
Therefore the total Hamiltonian acting on the initial composite state, $\rho(0)=\rho_f(0)\otimes\rho_q(0)$, is $H_{JC}=\sum_{j=1}^nH_{JC,j}$. After evolving the state for an interaction time $t$, we trace out the optical modes to find the  transduced qubit state $\rho_q(t)$. We consider an initial qubit state that is separable, $\rho_q(0)=\otimes_{j=1}^{n}\rho_q^j(0)$. This ensures that any correlation information measured in the qubits necessarily comes from the optical state through transduction. We do not explicitly consider noise in the optical-to-qubit transduction and also consider the frequency of the optical modes to be compatible with our qubits such that either no frequency conversion is necessary or that it has already been done. Some sources of noise in this transduction step could be captured by the parameter $\epsilon'$ that we introduce below, but we do not model these noise sources explicitly since the goal of this work is to establish idealized bounds on the sample complexity for Gaussian state characterization using QCIS.

\begin{figure}
    \centering
    \includegraphics[width=0.6\linewidth]{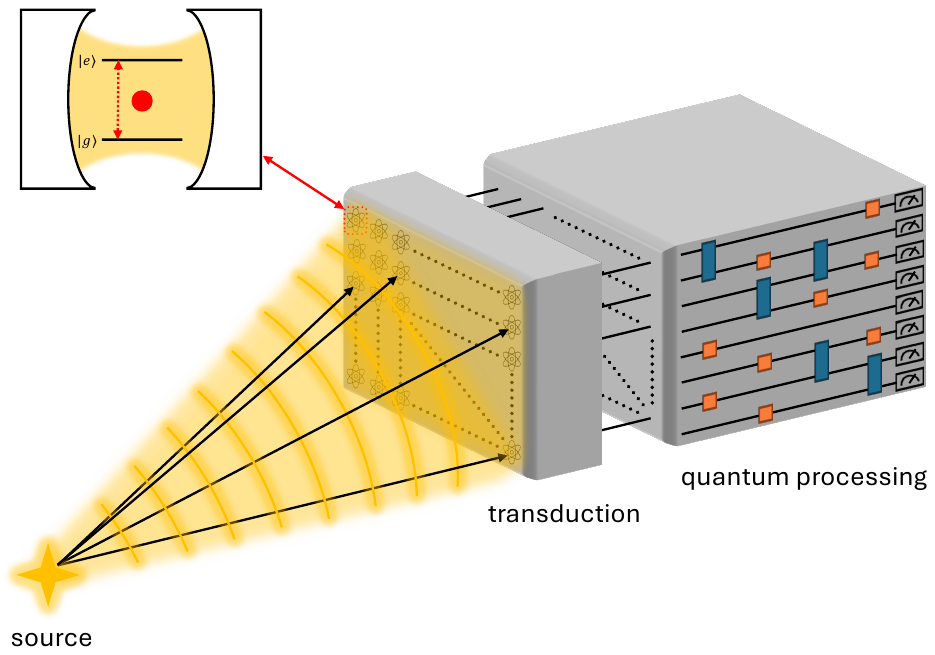}
    \caption{Schematic of the QCIS scheme to learn the quantum state of a multimode EM field by first transducing information into qubits and then performing a quantum computation on the qubits. In this example the field interacts with an array of qubits pixelated over an image plane, and thus there is a natural decompostion of the multimode field in terms of wavevector components (indicated by the black arrows). The details of the transduction will depend on  the information from the EM field to be extracted and the task at hand. In this paper we show that a simple Jaynes-Cummings interaction between a mode of the field and a qubit suffices to learn all properties of a Gaussian state of the field.
    \label{fig: QCIS_fig}}
\end{figure}

We seek an analytical solution of the JC evolution in terms of Pauli operators and Gaussian parameters so that we can extract the Gaussian parameters from Pauli measurements on $\rho_q(t)$. To do so, we use a Taylor series expansion of the unitary operator $U_{JC}(t)=e^{-itH_{JC}}$ and truncate the evolved state $\rho(t)=U_{JC}(t)\rho(0)U_{JC}^{\dagger}(t)$ after second order in $t$. This approximation yields a linear transduction mapping between one and two-qubit Pauli observables and the Gaussian parameters. Inverting this linear mapping thus provides a simple, albeit approximate, way to extract the Gaussian parameters from Pauli measurement data. Throughout, we denote the truncated state with tildes, $\widetilde{\rho}(t)$---\ie $\rho(t) = \widetilde{\rho}(t) + O(t^3)$.

In the remainder of this section, we explicitly calculate the truncated transduction evolution for $n=2$ mode Gaussian states, and show that all parameters can be estimated from sets of one- and two-qubit Pauli measurements on two distinct time-evolved initial qubit states.
In the $n$-mode case, the optical state is transduced into a register of $n$ qubits. Building off the $n=2$ results, we show that data from one- and two-qubit Pauli measurements from $\lceil{\log_2 n}\rceil+1$ distinct (product) initial states is sufficient to estimate all $n$-mode Gaussian parameters. In both cases, we assume the third-order and higher components of the evolution can be ignored entirely, and that our estimates of the Pauli observables are error-free. Both of these assumptions will be relaxed in \cref{sec:complexity} when we develop our complete QCIS procedure.

\subsection{Two-mode case} \label{sec:2-mode case}
We start by considering $n=2$, in which case we have an arbitrary two-mode Gaussian state and each of the two modes interacts independently with a qubit prepared in some initial state. In order to recover all 14 first and second moments of the two-mode state from the qubit states after interaction, we find that two different initial states of the two qubits are sufficient. We choose two distinct initial states, $\ket{\psi_1}\equiv \ket{g}_1\otimes \ket{g}_2$ and $\ket{\psi_2}\equiv\ket{+}_1\otimes \ket{+i}_2$, where $\ket{g}$ is the $-1$ eigenstate of $\sigma_z$ (\ie ground state), $\ket{+}$ is the $+1$ eigenstate of $\sigma_x$, and $\ket{+i}$ is the $+1$ eigenstate of $\sigma_y$. Note that this choice of initial states is not unique, but sufficient to estimate all Gaussian parameters. In Appendix \ref{app:2_qubit_map} we write explicit expressions for the state of the two qubits after interacting with the two modes, for each of the two initial states $\ket{\psi_1}$ and $\ket{\psi_2}$, and show that 
\begin{align}
\label{eq:o_gamma_approx}
    \bo \approx \mathcal{M} \bgamma,
\end{align}
where ${\bo}$ is a length-14 vector of one- and two-qubit Pauli expectations derived from the time-evolved qubits with one of the two possible initial states, $\bgamma$ is the length-14 vector of the two-mode Gaussian moments, and $\mathcal{M}$ is the linear mapping between the two. As explained in Appendix \ref{app:2_qubit_map}, some of the elements of $\bo$ are Pauli expectations shifted by known constants (we refer to these as \emph{shifted Pauli expectations}). This is required to turn the original affine relationship between Pauli expectations and Gaussian moments into a linear one. 

The linear approximate relationship in \cref{eq:o_gamma_approx} can be inverted to obtain estimates of the first and second moments of the two-mode Gaussian state from Pauli expectations measured on the two qubits after time evolution, $\widetilde{\pmb{\gamma}} = \mathcal{M}^{-1} {\bo}$. Note that it is sufficient to measure 14 of the 30 possible one- and two-qubit Pauli expectations on the transduced qubits to produce estimates of the 14 Gaussian moments. Performing this inversion yields the following estimates for the means:
\begin{equation} \label{two-mode solutions}
\begin{gathered}
    \widetilde{\expect{Q_1}} = \frac{\expect{Y_1}_{\ket{\psi_1}}}{2\bar{g}_1t}, \quad \widetilde{\expect{Q_2}} = \frac{\expect{Y_2}_{\ket{\psi_1}}}{2\bar{g}_2t} \\
    \widetilde{\expect{P_1}} = \frac{\expect{X_1}_{\ket{\psi_1}}}{2\bar{g}_1t}, \quad \widetilde{\expect{P_2}} = \frac{\expect{X_2}_{\ket{\psi_1}}}{2\bar{g}_2t},
\end{gathered}
\end{equation}
and for the covariance matrix elements:
\begin{equation} \label{two-mode solutions}
\begin{gathered}
    \widetilde{\expect{Q_1^2}} = \frac{1+\expect{Z_1}_{\ket{\psi_1}}}{2(\bar{g}_1t)^2}+1-\widetilde{\expect{P_1^2}} = \frac{(\expect{X_1}_{\ket{\psi_2}}+\expect{Z_1}_{\ket{\psi_1}})}{2(\bar{g}_1t)^2}+1 \\
    \widetilde{\expect{P_1^2}} = \frac{1-\expect{X_1}_{\ket{\psi_2}}}{2(\bar{g}_1t)^2}, \quad
    \widetilde{\expect{Q_2^2}} = \frac{1-\expect{Y_2}_{\ket{\psi_2}}}{2(\bar{g}_2t)^2} \\
    \widetilde{\expect{P_2^2}} = \frac{1+\expect{Z_2}_{\ket{\psi_1}}}{2(\bar{g}_2t)^2}+1-\widetilde{\expect{Q_2^2}} = \frac{(\expect{Y_2}_{\ket{\psi_2}}+\expect{Z_2}_{\ket{\psi_1}})}{2(\bar{g}_2t)^2}+1  \\
    \widetilde{\expect{\frac{Q_1P_1+P_1Q_1}{2}}} = -\frac{\expect{Y_1}_{\ket{\psi_2}}}{2(\bar{g}_1t)^2}, \quad
    \widetilde{\expect{\frac{Q_2P_2+P_2Q_2}{2}}} = -\frac{\expect{X_2}_{\ket{\psi_2}}}{2(\bar{g}_2t)^2} \\
    \widetilde{\expect{Q_1Q_2}} = \frac{\expect{Y_1Y_2}_{\ket{\psi_1}}}{4\bar{g}_1\bar{g}_2t^2}, \quad
    \widetilde{\expect{Q_1P_2}} = \frac{\expect{Y_1X_2}_{\ket{\psi_1}}}{4\bar{g}_1\bar{g}_2t^2} \\
    \widetilde{\expect{P_1Q_2}} = \frac{\expect{X_1Y_2}_{\ket{\psi_1}}}{4\bar{g}_1\bar{g}_2t^2}, \quad
    \widetilde{\expect{P_1P_2}} = \frac{\expect{X_1X_2}_{\ket{\psi_1}}}{4\bar{g}_1\bar{g}_2t^2}.
\end{gathered}
\end{equation}
In these expressions, $\expect{O}_{\ket{\psi}} \equiv \tr(O \widetilde{\rho}_q(\ket{\psi}\bra{\psi}))$ denotes the expectation of the Pauli $O$ when evolved from the initial state $\psi \in \{\psi_1,\psi_2\}$.

\subsection{$n$-mode case} \label{sec:n-mode case}
We now show how to recover all Gaussian parameters of an $n$-mode optical state by building on the two-mode case discussed above. The key observation is that the functional relationship between the Gaussian parameters defining modes $j$ and $k$ and Pauli observables of qubits $j$ and $k$ matches that derived for the two-mode case above, regardless of $j,k$. This observation is formalized in the following Lemma.

\begin{lemma}\label{nontrivial subsystem}
Consider an $n$-mode optical Gaussian state $\rho_f(0)$ interacting with an $n$-qubit state $\rho_q(0)$ via the JC interaction, $U_{JC}(t)$. The expectation value of a two-qubit Pauli observable $O_{jk}$ acting nontrivially only on qubits $j$ and $k$ is equal to the expectation value of the same observable for a two-qubit system interacting with a two-mode Gaussian state whose parameters are identical to those of modes $j,k$ in $\rho_f(0)$.
\end{lemma}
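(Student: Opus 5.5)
The plan is to use the tensor-product structure of the Jaynes--Cummings evolution. The Hamiltonian is $H_{JC}=\sum_{l=1}^n H_{JC,l}$, and each $H_{JC,l}$ acts only on mode $l$ and qubit $l$. Hence the terms commute pairwise, and
\begin{equation*}
U_{JC}(t)=\bigotimes_{l=1}^n U_l(t), \qquad U_l(t)=e^{-itH_{JC,l}} .
\end{equation*}
The expectation we need is $\tr\big[(\mathds{1}_f\otimes O_{jk})\,U_{JC}(t)\big(\rho_f(0)\otimes\rho_q(0)\big)U_{JC}^{\dagger}(t)\big]$, where $O_{jk}=O_j\otimes O_k\otimes\mathds{1}_{\text{rest}}$. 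Writing the trace in the Heisenberg picture, the operator $\mathds{1}_f\otimes O_{jk}$ is conjugated by $U_{JC}(t)$. For every $l\notin\{j,k\}$, the factor $U_l$ acts on a part of the observable that is the identity, so $U_l^{\dagger}\mathds{1}U_l=\mathds{1}$. The evolved observable is therefore $\big(U_jU_k\big)^{\dagger}\big(\mathds{1}_{f_j f_k}\otimes O_j\otimes O_k\big)U_jU_k$ tensored with the identity on all other modes and qubits.

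Next I would use that $\rho_q(0)=\bigotimes_l\rho_q^l(0)$ is a product state and take the partial trace over all modes and qubits other than $j,k$. Because the evolved observable is the identity on those subsystems, the expectation reduces to
\begin{equation*}
\tr\big[(\mathds{1}\otimes O_j\otimes O_k)\,U_jU_k\big(\rho_f^{(jk)}(0)\otimes\rho_q^j(0)\otimes\rho_q^k(0)\big)(U_jU_k)^{\dagger}\big],
\end{equation*}
where $\rho_f^{(jk)}(0)=\tr_{\text{modes}\neq j,k}\rho_f(0)$ is the reduced two-mode state. This is exactly the two-qubit, two-mode setting of \cref{sec:2-mode case}, with couplings $\bar g_j,\bar g_k$ and initial qubit states $\rho_q^j(0),\rho_q^k(0)$. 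Single-qubit observables are the special case $O_k=\mathds{1}$.

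It remains to show that $\rho_f^{(jk)}(0)$ is a Gaussian state whose parameters are those of modes $j,k$ in $\rho_f(0)$. A partial trace of a Gaussian state is Gaussian: its characteristic function is obtained by setting to zero the arguments for the traced-out modes, and the result is still a Gaussian function. Its mean vector and covariance matrix are the corresponding sub-blocks, since $\tr(\rho_f^{(jk)}A)=\tr(\rho_f A)$ for any operator $A$ supported on modes $j,k$. For the second-order moments, applying this with $A=\{{\bf R}_a,{\bf R}_b\}/2$ gives the non-centralized moments collected in $\bsigma$. One could also avoid Gaussianity entirely, because the argument only requires that the reduced state reproduces the relevant moments.

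The main obstacle is conceptual rather than computational. The factorization requires the interaction terms to act on disjoint subsystems, which holds by construction, and the qubits must start in a product state, which the paper assumes. Had the qubits been initially correlated, the reduction would fail, so I would state this dependence explicitly. The same argument applies unchanged to the truncated state $\widetilde{\rho}(t)$. Truncating each $U_l$ to second order and collecting terms, the terms involving $l\notin\{j,k\}$ contribute only $\tr(\rho_l)=1$ factors, either exactly or up to higher order. It is cleanest to observe that the second-order truncation of the reduced dynamics matches the truncation of the full dynamics, and I would note this so that the lemma can be used with $\widetilde{\rho}_q$ later.
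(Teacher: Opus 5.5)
Your proposal is correct and follows essentially the same route as the paper: factorize $U_{JC}(t)$ into commuting mode--qubit unitaries, pass to the Heisenberg picture so the factors outside $\{j,k\}$ cancel, trace out the remaining modes and qubits, and use that the Gaussian marginal has the corresponding sub-blocks of $\bmu$ and $\bSigma$ (you justify this via the characteristic function, where the paper cites the literature). Two of your side remarks are slightly off, though neither affects the proof: the reduction does not need the qubits to start uncorrelated, since the partial trace simply yields the reduced two-qubit state $\rho_q^{jk}(0)$, which is then just not of product form; and Gaussianity of the marginal can only be dropped for the second-order truncated expressions, because the exact expectation depends on all moments of $\rho_f^{(jk)}(0)$.
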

\begin{proof}
Without loss of generality, setting $j,k=1,2$ such that the qubit observable $O_{12}$ acts nontrivially only on modes 1 and 2,
\begin{align} \label{n mode evals}
    \tr(O_{12}\rho(t))
    &=
    \tr(O_{12}U_{JC}(t)[\rho_f(0)\otimes\rho_q(0)]U_{JC}^\dagger(t)) = \tr(U_{JC}^\dagger(t)O_{12}U_{JC}(t)[\rho_f(0)\otimes\rho_q(0)]) \nn \\
    &=
    \tr(U_{1,2}^\dagger(t) O_{12}U_{1,2}(t)[\rho_f(0)\otimes\rho_q(0)]) = \tr_{1,2}(U_{1,2}^\dagger(t) O_{12}U_{1,2}(t)~ \tr_{3,...,n}[\rho_f(0)\otimes\rho_q(0)]) \nn \\
    &=
    \tr_{1,2}(O_{12}U_{1,2}(t)[\rho_f^{1,2}(0)\otimes\rho_q^{1,2}(0)]U_{1,2}^\dagger(t))
\end{align}
where we have used the fact that the interaction unitary factorizes, \ie $U_{JC}(t) = \otimes_{i=1}^n e^{-it H_{JC,i}}$, and the notation $U_{1,2}(t) = \otimes_{i=1}^2 e^{-it H_{JC,i}}$. In addition, $\rho_f^{1,2}(0)$ and $\rho_q^{1,2}(0)$ are the partial trace over modes/qubits $3,...,n$ of the states $\rho_f(0)$ and $\rho_q(0)$. Crucially, the state $\rho_f^{1,2}(0)$ is again a Gaussian state, with the same means and covariances for modes 1 and 2 as prescribed by the state $\rho_f(0)$ since partial tracing over a Gaussian state yields another Gaussian state with marginalized covariance matrix and mean vector. That is, for a multimode optical Gaussian state $\rho_f^{AB}$, taking the partial trace over the $B$ modes yields a reduced state $\rho_f^{A}$ that is also Gaussian with the same means and covariances of the subsystem $A$ \cite{adessoContinuousVariableQuantum2014}. In other words, for a mean vector and covariance matrix
\begin{equation}\label{mean_vector}
{\bmu}=
\begin{pmatrix}
    {\bmu}_A \\
    {\bmu}_B
\end{pmatrix}, \quad\quad 
{\bf\Sigma}
=
\begin{pmatrix}
    {\bf\Sigma}_{A} & {\bf\Sigma}_{AB} \\
    {\bf\Sigma}_{AB}^T & {\bf\Sigma}_{B}
\end{pmatrix},
\end{equation}
tracing out subsystem $B$ gives a reduced state $\rho_f^A$ that is a Gaussian state with mean vector ${\bmu}_A$ and covariance matrix ${\bf\Sigma}_A$.
\end{proof}

Lemma \ref{nontrivial subsystem} along with the analysis in the previous subsection implies that all Gaussian moments relevant to modes $j,k$ can be recovered from Pauli expectations of qubits $j,k$ evolved under the JC interaction applied to a pair of initial states whose two-qubit reduced states on $j,k$ are $\ket{\psi_1}=\ket{g}\otimes \ket{g}$ and $\ket{\psi_2}=\ket{+}\otimes \ket{+i}$ (or $\ket{+i}\otimes \ket{+}$). In the following lemma, we establish that this can be achieved for all $n$ modes using only logarithmically many distinct states. 

\begin{lemma}\label{n-mode initial states}
There is a set of $\lceil{\log_2 n}\rceil$ $n$-qubit states such that, for every pair of qubits $j,k$, there is at least one state whose reduced state on qubits $j,k$ is either $\ket{+}\otimes \ket{+i}$ or $\ket{+i}\otimes \ket{+}$.
\end{lemma}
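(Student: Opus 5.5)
Label the qubits $j\in\{0,1,\dots,n-1\}$ and write each index in binary using $m=\lceil\log_2 n\rceil$ bits, $j=(b_1(j),\dots,b_m(j))$ with $b_\ell(j)\in\{0,1\}$. For each bit position $\ell\in\{1,\dots,m\}$ we define the product state
\begin{equation*}
\ket{\phi_\ell}=\bigotimes_{j=0}^{n-1}\ket{s_\ell(j)},\qquad \ket{s_\ell(j)}=\begin{cases}\ket{+} & b_\ell(j)=0,\\ \ket{+i} & b_\ell(j)=1.\end{cases}
\end{equation*}
These $m$ states are the proposed set. Each is a product state, so it is an admissible separable initial qubit state as required in the transduction setup.

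Since $\ket{\phi_\ell}$ is a product state, its reduced state on any pair of qubits $j,k$ is simply $\ket{s_\ell(j)}\otimes\ket{s_\ell(k)}$. This reduced state equals $\ket{+}\otimes\ket{+i}$ or $\ket{+i}\otimes\ket{+}$ exactly when $b_\ell(j)\neq b_\ell(k)$. For $j\neq k$, the binary representations of $j$ and $k$ are distinct $m$-bit strings, because $n\le 2^m$ guarantees that every index fits in $m$ bits. Hence there is some position $\ell$ at which they differ, and the corresponding state $\ket{\phi_\ell}$ has the required reduced state on qubits $j,k$. This proves the claim.

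There is no substantive obstacle in this argument. The points to check are that the case $j=k$ is excluded (the statement concerns pairs of distinct qubits), and that for $n=1$ the claim is vacuous with $\lceil\log_2 1\rceil=0$ states. In combination with \cref{nontrivial subsystem}, adding the single state $\ket{g}^{\otimes n}$ then gives the $\lceil\log_2 n\rceil+1$ initial states mentioned earlier. Either ordering of $\ket{+}$ and $\ket{+i}$ suffices, since the two-mode inversion of \cref{sec:2-mode case} is symmetric under relabeling which mode is first.
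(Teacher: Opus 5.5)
Your proof is correct and uses the same family of states as the paper: the paper builds them by induction (at each doubling it maps every state $\ket{\psi}\mapsto\ket{\psi}^{\otimes 2}$ and adds $\ket{+}^{\otimes n}\otimes\ket{+i}^{\otimes n}$), and unrolling that recursion assigns $\ket{+}$ or $\ket{+i}$ to each qubit according to one binary digit of its index, exactly as in your construction. Your observation that distinct indices differ in some bit replaces the induction, and it handles $n$ that is not a power of two directly, without the paper's step of padding to $2^{\ell}$ qubits and tracing out the extras.
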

\begin{proof}
Let $\ell=\lceil{\log_2 n}\rceil$. We may assume that $n=2^\ell$ exactly, as in the case where $2^{\ell-1} < n \le 2^\ell$ we may use the set of states chosen for the $n = 2^\ell$ case but with qubits $n + 1, \dots 2^\ell$ traced out. 

The proof can be done via induction on $\ell$. The lemma is immediate in the $\ell=1$ ($n=2$) case. Let us now assume the lemma is true for some $\ell$. We have a set of $\ell$ $n$-qubit states such that for all $j\neq k$ with $j,k\leq n$, a pair of states reduces to either $\ket{+}_j\otimes \ket{+i}_k$ or $\ket{+i}_j\otimes \ket{+}_k$. We will construct a set of $(\ell + 1)$ $2n$-qubit states such that one of these pairings exist for every $j\neq k$ with $j,k \le 2n$.

We start by taking each state $\ket{\psi}$ from our set of $n$-qubit states and placing the state $\ket{\psi}^{\otimes 2}$ in our set of $2n$-qubit states. Now, whenever $j\neq k$ with $1 \le j,k \le n$ or $n+1 \le j,k \le 2n$, we have one of the two desired reduced states. To complete the set, we add $\ket{+}^{\otimes n} \otimes \ket{+i}^{\otimes n}$. We now have the property for all $1 \le j, k \le 2n$, and a set of size $\ell+1$.   

A visualization of this proof is provided in \cref{fig: n-mode initial states}.
\end{proof}

\begin{figure}
    \centering
    \includegraphics[width=0.5\linewidth]{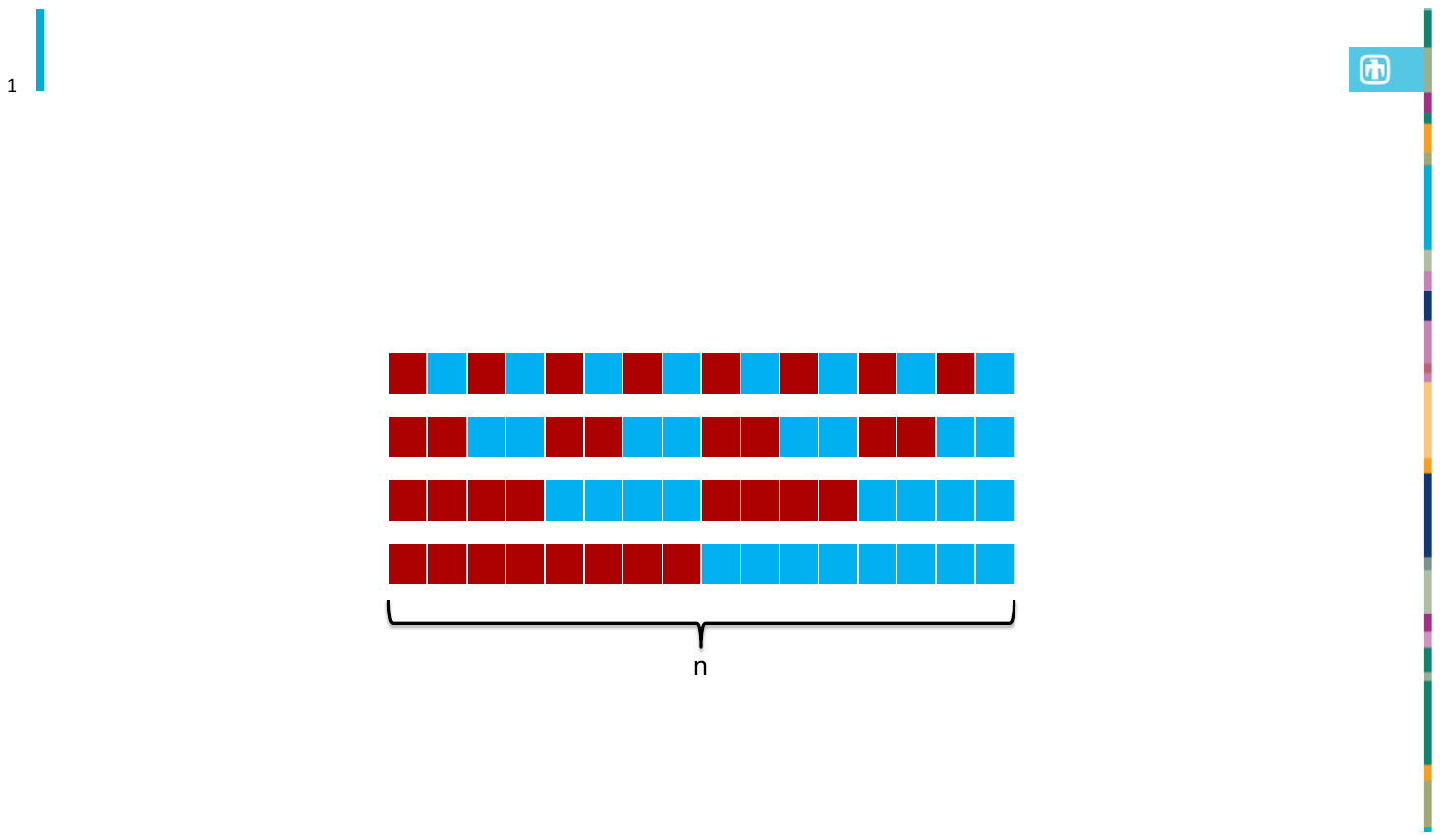}
    \caption{Visualization of the proof of Lemma \ref{n-mode initial states} for the $\ell=4$ ($9\leq n\leq16$) case. The red blocks denote initial qubit state $\ket{+}$ and the blue blocks denote initial qubit state $\ket{+i}$. There are $l=\lceil{\log_2n}\rceil$ unique initial $n$-qubit states.}
    \label{fig: n-mode initial states}
\end{figure}

We therefore conclude that \emph{all} Gaussian moments can be approximately recovered from the one- and two-qubit Pauli expectations of the transduced $n$-qubit registers of the JC interaction applied to a collection of $\lceil{\log n}\rceil+1$ states. The additional state beyond those in Lemma \ref{n-mode initial states} is $\ket{g}^{\otimes n}$, which provides the reduced state $\ket{g}\ket{g}$ for all mode pairings. Note that according to the estimation equations in \cref{sec:2-mode case}, only a subset of one- and two-qubit Pauli expectations are necessary, but for the sake of simplicity we consider measuring all one- and two-qubit Paulis as this will not change the required resources significantly. 

In the next section, we show how to use this observation to construct a sample-efficient algorithm.

\section{Iterative algorithm for Gaussian state learning}
\label{sec:complexity}

In \cref{sec:transduction} we derived a linear mapping between one- and two-qubit Pauli expectations, collected in a vector $\bo$, and Gaussian moments for the corresponding two optical modes, $\bgamma$, under $U_{JC}(t)$, by ignoring the higher-order terms of the JC interaction. In addition, we showed how this procedure can be repeated for every pair of modes in a $n$-mode Gaussian state to approximately recover all moments that define the Gaussian state. However, there were no bounds on the estimation error, due to the neglected higher-order terms of the JC interaction and statistical noise in the estimates of the Pauli observables. 

In this section we define a precise algorithm to obtain estimates of the Gaussian parameters with controlled error as a function of the resources, principally the number of copies of the Gaussian state. We focus on the problem of estimating Gaussian parameters of a two-mode state since, as discussed above, the full learning procedure is simply a repetition of this two-mode learning protocol for all pairs of modes.

The learning task we consider is to estimate every mean and covariance value to additive error $\epsilon$ (\cref{task}). The vector of (shifted) Pauli expectations, considered as a function of the Gaussian moments of the two-mode state that the qubits interact with, can be written as 
\begin{equation} \label{mapping}
{\bo}(\bgamma)=\mathcal{M}{\bgamma}+f(\bgamma),
\end{equation}
where the first term is the linear component  and $f({\bgamma})$ is the higher-order tail beyond the truncated solution of the JC interaction. 

\begin{algorithm}[t]
\caption{Iterative Gaussian moment extraction from Pauli estimates}\label{alg:alg}
\begin{algorithmic}
\State \textbf{Input:} $\widetilde{\bo}$ and $R$ \Comment{Measured (and shifted) Pauli expectations and number of estimation iterations}
\State ${\widetilde\bgamma}_0 \gets \mathcal{M}^{-1}{\widetilde \bo}$
\For{$r=1,...,R$ \do}
\State ${\widetilde\bgamma}_r \gets \mathcal{M}^{-1}({\widetilde \bo}-f({\widetilde\bgamma}_{r-1})) =  \widetilde{\bgamma}_{r-1} + \mathcal{M}^{-1}({\widetilde \bo}-\bo({\widetilde\bgamma}_{r-1}))$
\EndFor \\
\Return{${\widetilde\bgamma}={\widetilde\bgamma}_R$}
\end{algorithmic}
\end{algorithm}

The Pauli expectations can only be measured to finite precision, so suppose $\norm{{\widetilde{\bo}} - \bo}_\infty\leq\epsilon'$, where $\bo$ $(\widetilde{\bo})$ is the true (estimated) values of the Pauli expectations. Given $\widetilde\bo$, a simple estimate of the Gaussian moments, as described in \cref{sec:2-mode case}, is given by $\widetilde{\bgamma}_0 = \mathcal{M}^{-1}\widetilde{\bo}$. However, we can exploit our knowledge of the mapping in \cref{mapping} to improve the estimation error over this simple estimate by performing the iterative moment estimation algorithm described above as \cref{alg:alg}. 

In order to evaluate the estimation error, we begin with bounds on the worst-case error for the simple estimate and the iterated estimate, \ie
\begin{equation} \label{errorbound_iteration_zero}
\begin{split}
    \norm{{\widetilde\bgamma}_0-{\bgamma}}_\infty &= \norm{\mathcal{M}^{-1}({\widetilde\bo}-{\bo}(\bgamma)+f({\bgamma}))}_\infty
    \leq \norm{\mathcal{M}^{-1}}_{\infty,\infty}\Big{(}\norm{{\widetilde \bo}-{\bo}(\bgamma)}_\infty+\norm{f({\bgamma})}_\infty\Big{)},
\end{split}
\end{equation}
and
\begin{equation} \label{errorbound_iteration_r}
\begin{split}
    \norm{{\widetilde\bgamma}_r-{\bgamma}}_\infty &= \norm{\mathcal{M}^{-1}({\widetilde \bo}-f({\widetilde\bgamma}_{r-1})-{\bo}(\bgamma)+f(\bgamma))}_\infty \\
    &\leq \norm{\mathcal{M}^{-1}}_{\infty,\infty}\Big{(}\norm{{\widetilde\bo}-{ \bo}(\bgamma)}_\infty+\norm{f({\widetilde\bgamma}_{r-1})-f({\bgamma})}_\infty\Big{)}.
\end{split}
\end{equation}
Here $\norm{A}_{\infty,\infty} \equiv \max_{1\leq i \leq n}\sum_{j=1}^n |A_{ij}|$, is the induced $\infty$-norm, satisfying the property $\norm{Ab}_\infty\leq\norm{A}_{\infty,\infty}\norm{b}_\infty$ for matrix $A$ and vector $b$.
Comparing \cref{errorbound_iteration_zero} and \cref{errorbound_iteration_r}, it is evident that the accuracy of the Gaussian parameter estimates will improve as long as $\norm{f({\widetilde\bgamma}_r)-f({\bgamma})}_\infty<\norm{f({\bgamma})}_\infty$ for some chosen final $r=R$. We now introduce two lemmas that bound $\norm{\mathcal{M}^{-1}}_{\infty,\infty}$ and $\norm{f({\widetilde\bgamma}_r)-f({\bgamma})}_\infty$ so that we can explicitly evaluate the error $\norm{{\widetilde\bgamma}_r-{\bgamma}}_\infty$ and set parameters that achieve our desired accuracy.

In the following, for simplicity we assume that the JC interaction strength for each mode-qubit pair is of the same order---\ie $\bar{g}_i=\Theta(g)$ for all $i$. The following Lemma bounds the norms of submatrices within $\mathcal{M}^{-1}$, which is useful for separating the scaling of errors for estimates of first- and second-order Gaussian moments.
\begin{lemma}\label{lemma1}
    $\mathcal{M}^{-1}$ decomposes as $\left(\begin{array}{c}
         \mathcal{A} \\
         \mathcal{B} 
    \end{array}\right)$ where $\dim(\mathcal{A})=4\times 14$ and $\dim(\mathcal{B})=10\times 14$, and
    \begin{equation}
    \begin{split}
    &\norm{\mathcal{A}}_{\infty,\infty}=\mathcal{O}\left(\frac{1}{gt}\right), \\
    &\norm{\mathcal{B}}_{\infty,\infty}=\mathcal{O}\left(\frac{1}{(gt)^2}\right).
    \end{split}
    \end{equation}
\end{lemma}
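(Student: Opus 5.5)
The approach is to read $\mathcal{M}^{-1}$ directly off the explicit inversion formulas in \cref{sec:2-mode case}, instead of inverting $\mathcal{M}$ abstractly. Each row of $\mathcal{M}^{-1}$ is the coefficient vector expressing one estimated Gaussian moment as a linear combination of (shifted) Pauli expectations. So $\norm{\mathcal{M}^{-1}}_{\infty,\infty}$ restricted to a block of rows is just the largest $\ell_1$-norm of these coefficient vectors over the rows in that block.

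The first step is to fix the ordering of $\bgamma$: the four means first, then the ten second moments, consistent with $\bgamma = (\bmu,\bsigma)^T$. This fixes $\mathcal{A}$ as the first $4$ rows and $\mathcal{B}$ as the remaining $10$.

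Second, we bound the rows of $\mathcal{A}$. From \cref{two-mode solutions}, each mean estimate is a single Pauli expectation (for example $\expect{Y_1}_{\ket{\psi_1}}$) divided by $2\bar g_j t$. Each row of $\mathcal{A}$ therefore has exactly one nonzero entry, of size $1/(2\bar g_j t)$. Since $\bar g_j=\Theta(g)$, this gives $\norm{\mathcal{A}}_{\infty,\infty}=\max_j 1/(2\bar g_j t)=\mathcal{O}(1/(gt))$.

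Third, we bound the rows of $\mathcal{B}$. Each second-moment estimate is a combination of at most two (shifted) Pauli expectations. The coefficients are $\pm 1/(2(\bar g_j t)^2)$ for the local moments and $1/(4\bar g_1\bar g_2 t^2)$ for the cross moments. The local-moment expressions also contain additive constants such as $1$ and the $1\pm$ shifts. These constants are exactly what the shifted Pauli expectations in $\bo$ absorb (Appendix~\ref{app:2_qubit_map}), so they enter neither $\mathcal{M}$ nor $\mathcal{M}^{-1}$.

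The one point requiring care is the $\expect{Q_1^2}$ and $\expect{P_2^2}$ rows. There, substituting the $\widetilde{\expect{P_1^2}}$ formula produces a combination of two entries, $\expect{X_1}_{\ket{\psi_2}}$ and $\expect{Z_1}_{\ket{\psi_1}}$, each with coefficient $1/(2(\bar g_1 t)^2)$. The $\ell_1$ row sum is then $1/(\bar g_1 t)^2$, which is still $\mathcal{O}(1/(gt)^2)$. Taking the maximum over the ten rows gives $\norm{\mathcal{B}}_{\infty,\infty}=\mathcal{O}(1/(gt)^2)$, with an explicit constant of about $1/\min_j\bar g_j^2$.

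The main obstacle I expect is the bookkeeping of the shifts, namely checking that the affine-to-linear conversion in Appendix~\ref{app:2_qubit_map} does not leave any $t$-independent or $1/(gt)$ terms mixing between the blocks. Concretely, this means verifying that no mean estimate picks up a second-order Pauli coefficient and that the shifted entries of $\bo$ are exactly those appearing with the constants above. I would settle this by writing out the $14\times14$ matrix $\mathcal{M}$ from the appendix in block form, $\mathcal{M}=\begin{pmatrix} gt\,D_1 & 0\\ 0 & (gt)^2 D_2\end{pmatrix}$ up to row permutation, with $D_1,D_2$ having $\Theta(1)$ entries. If this block structure holds, $\mathcal{M}^{-1}$ inherits the claimed scalings immediately, and the explicit formulas above serve as the check that $D_1^{-1}$ and $D_2^{-1}$ have $\mathcal{O}(1)$ row sums.
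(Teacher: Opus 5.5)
Your proposal is correct and follows essentially the same route as the paper, which proves the lemma by direct computation: it displays $\mathcal{M}^{-1}$ explicitly in Appendix~\ref{app:2_qubit_map}, where the first four rows each contain a single entry $1/(2\bar{g}_j t)$ and the last ten rows contain at most two entries of magnitude $\mathcal{O}(1/(gt)^2)$, the two-entry rows being exactly the $\expect{Q_1^2}$ and $\expect{P_2^2}$ rows you single out. Your block-diagonal check of $\mathcal{M}$ is a reasonable sanity check that the paper leaves implicit, but the lemma only needs the row sums read off the explicit matrix.
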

\begin{proof}
This is proved by direct computation. In Appendix \ref{app:2_qubit_map} we display $\mathcal{M}^{-1}$ to show the above norm scalings.
\end{proof}

Now that we have bounded the inversion on the linear part of our solution that gives us our estimates of $\bgamma$, we have to also bound how much an error on these estimates perturbs the nonlinear part $f({\bgamma})$. Recall that we follow the convention of writing $\bgamma = (\bmu,\bsigma)^T$, where $\bmu$ is the vector of means and $\bsigma$ the vector of (co)variances. We will likewise write ${\widetilde\bgamma}=({\widetilde\bmu},{\widetilde\bsigma})^T$ for an estimate of these parameters.

\begin{lemma}\label{lemma2}
    There is a constant $C>0$ such that if $gt\leq \frac{1}{C\sqrt{E_{\rm max}}}$, then for all estimates ${\widetilde\bgamma}=({\widetilde\bmu},{\widetilde\bsigma})^T$ with $\norm{\widetilde\bmu}_{\infty}<2\sqrt{2E_{\rm max}}$ and $\norm{\widetilde\bsigma}_{\infty}<4E_{\rm max}$,
    \begin{equation}
    \begin{split}
    \norm{f({\widetilde\bgamma})-f({\bgamma})}_\infty=\mathcal{O}\Big{(}(gt)^3\Big{(}E_{\rm max}\norm{{\widetilde\bmu}-{\bmu}}_\infty 
    +\sqrt{E_{\rm max}}\norm{{\widetilde\bsigma}-{\bsigma}}_\infty \Big{)}\Big{)}.
    \end{split}
    \end{equation}
\end{lemma}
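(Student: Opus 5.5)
The approach is to write $f$ explicitly as a convergent power series in $gt$. Each coefficient will be a polynomial in the Gaussian moments $(\bmu,\bsigma)$. A telescoping (multilinear Lipschitz) estimate is then applied term by term, and the series is summed geometrically, using $gt\sqrt{E_{\rm max}}\le 1/C$ to show that the cubic order dominates.

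\emph{Step 1 (series representation of $f$).} Work in the Heisenberg picture. By Lemma \ref{nontrivial subsystem} it suffices to consider two modes. For each Pauli $O$ entering $\bo$,
\begin{equation*}
\tr\big(O\rho_q(t)\big)=\sum_{k\ge0}\frac{(it)^k}{k!}\tr\big(\mathrm{ad}_{H}^k(O)\,[\rho_f(0)\otimes\rho_q(0)]\big),\qquad \mathrm{ad}_H(X)=[H,X],
\end{equation*}
with $H=\sum_j\bar g_j(Q_jX_j-P_jY_j)$. By induction on $k$, $\mathrm{ad}_H^k(O)$ is a sum of at most $c_0^k$ terms. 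Each term is $g^k$ times a Pauli times a product of at most $k$ quadratures, because commutators of quadratures only lower the degree, by constants. The $k\le2$ terms are exactly $\mathcal{M}\bgamma$ plus the known shifts, so $f$ is the $k\ge3$ tail. For a Gaussian state, the Isserlis/Wick theorem expresses every moment of degree $d\le k$ as a polynomial in $\bmu$ and $\bSigma=\bsigma-\bmu\bmu^T$, hence in $(\bmu,\bsigma)$. Each monomial has $a$ factors of $\bmu$ and $b$ factors of $\bsigma$ with $a+2b\le d$, and the number of monomials is at most $c_1^k (k-1)!!$. This defines $f(\bgamma)$ as a polynomial series. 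We use the same formula to define $f(\widetilde\bgamma)$ for arbitrary estimates, which is how it enters Algorithm \ref{alg:alg}.

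\emph{Step 2 (term-wise Lipschitz bound).} For a monomial $\prod_i x_i$, use $|\prod x_i-\prod y_i|\le\sum_i|x_i-y_i|\prod_{l\ne i}\max(|x_l|,|y_l|)$. Under the hypotheses, every $\bmu$-type factor is bounded by $2\sqrt{2E_{\rm max}}$ and every $\bsigma$-type factor by $4E_{\rm max}$, for both the true and the estimated values. A degree-$d$ monomial therefore changes by at most
\begin{equation*}
(c_2\sqrt{E_{\rm max}})^{d-1}\norm{\widetilde\bmu-\bmu}_\infty\, d + (c_2\sqrt{E_{\rm max}})^{d-2}\norm{\widetilde\bsigma-\bsigma}_\infty\, d .
\end{equation*}
Lower-degree terms have fewer powers of $\sqrt{E_{\rm max}}$. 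Since $E_{\rm max}\ge1/2$ (vacuum energy), these terms are dominated by the degree-$k$ ones.

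\emph{Step 3 (summation).} Combining Steps 1 and 2, the order-$k$ contribution to $\norm{f(\widetilde\bgamma)-f(\bgamma)}_\infty$ is at most
\begin{equation*}
\frac{(c_3 g t)^k (k-1)!!\,k}{k!}\Big(E_{\rm max}^{(k-1)/2}\norm{\widetilde\bmu-\bmu}_\infty+E_{\rm max}^{(k-2)/2}\norm{\widetilde\bsigma-\bsigma}_\infty\Big).
\end{equation*}
Since $(k-1)!!\,k/k!\le c_4^k$, this reduces to $(c_5 gt\sqrt{E_{\rm max}})^{k-3}$ times the $k=3$ term. Choosing $C\ge 2c_5$ makes the series geometric with ratio at most $1/2$. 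The sum is then $\mathcal{O}\big((gt)^3(E_{\rm max}\norm{\widetilde\bmu-\bmu}_\infty+\sqrt{E_{\rm max}}\norm{\widetilde\bsigma-\bsigma}_\infty)\big)$, which is the claimed bound. As a sanity check at $k=3$, a term of the form $\mu\sigma$ changes by $\sigma\Delta\mu+\mu\Delta\sigma\sim E\Delta\mu+\sqrt E\Delta\sigma$.

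\emph{Main obstacle.} The hardest step will be the combinatorial bookkeeping in Step 1. I need to show that both the number of terms in $\mathrm{ad}_H^k(O)$ after ordering the quadratures and the number of Wick pairings grow at most like $c^k k!$, so that the $1/k!$ from the Taylor expansion keeps the series convergent. I would first try to bound the Wick moments directly by $|\tr(\rho R_{i_1}\cdots R_{i_d})|\le (c\sqrt{E_{\rm max}})^d\sqrt{d!}$, which follows from Gaussianity, and handle the polynomial structure only for the Lipschitz difference. A second subtlety is that $f$ must be evaluated at estimates that need not be physical states. This is why $f$ is defined as the polynomial series in Step 1, with the bounds on $\norm{\widetilde\bmu}_\infty$ and $\norm{\widetilde\bsigma}_\infty$ in the statement making the same estimates apply.
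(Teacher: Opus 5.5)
Your proposal is correct and follows essentially the same route as the paper. Both arguments expand each Pauli expectation as a power series in $gt$ restricted to the two relevant modes, and use Isserlis/Wick to write each order-$k$ moment as at most $c^k(k-1)!!$ products of first and second moments. Both then bound the change in these products using the bounds $2\sqrt{2E_{\rm max}}$ and $4E_{\rm max}$, and sum the $k\ge 3$ tail geometrically under $gt\le 1/(C\sqrt{E_{\rm max}})$. Your bookkeeping is somewhat cleaner than the paper's, in three ways: you work in the Heisenberg picture with unreordered quadrature words (Wick applies directly to ordered products, with the commutator constant folded into the two-point function); you keep the $1/k!$ Taylor factor explicitly to absorb the $(k-1)!!$ growth; and you use a full telescoping bound over all factors. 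This avoids the paper's binomial expansion of $(Q_lX_l-P_lY_l)^{a_l}$, whose summands do not commute, but the underlying argument is the same.
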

\begin{proof}
In order to bound the higher-order contributions collected in $f(\bgamma)$ let us begin by studying the exact form of the Pauli observables that form of the entries of $\bo(\gamma)$. The Pauli observables are under a time evolved state $\rho(t)$, which we can write using a Taylor expansion of the unitary time evolution (or equivalently Campbell's identity \cite{Achilles2012}) as
\begin{align}
\begin{split}
    \rho(t)=e^{-iH_{JC}t}\rho(0)e^{iH_{JC}t} &=\sum_{k=0}^\infty\frac{(-it)^k}{k!}[H_{JC},\rho(0)]_k \\
\end{split}
\end{align}
where $H_{JC}=\sum_{l=1}^nH_{JC,l}=\sum_{l=1}^n\bar{g}_l(Q_lX_l-P_lY_l)$ and $[H,\rho(0)]_k \equiv [H,[H,...[H,\rho(0)]...]]$ is the $k$-nested commutator. The initial state $\rho(0)$ is the $n$-mode Gaussian state tensored with a product initial state of the $n$ qubits---\ie $\rho(0) = \rho_f(0) \otimes_{l=1}^n \rho_q^l(0)$. Taking the expectation of the evolved state with a qubit observable $O_m$ acting on a set of qubits $m\subseteq\{1,...n\}$:
\begin{align}\label{Pauli_expectation}
\begin{split}
    \tr(O_m\rho(t)) &= \sum_{k=0}^\infty\frac{(-it)^k}{k!} \tr(O_m[H_{JC},\rho(0)]_k) = \sum_{k=0}^\infty\frac{(-it)^k}{k!} \tr(O_m[\bar{H},\bar{\rho}(0)]_k), 
\end{split}
\end{align}
where $\bar{H}=\sum_{l\in m}H_l$ and $\bar\rho(0) = \tr_{\bar{m}}(\rho(0)) = \rho_f^m(0)\otimes_{l\in m}\rho_q^l(0)$ is the reduced initial density matrix obtained after tracing out all modes and qubits not in the set $m$. In the second equality in \cref{Pauli_expectation} we have used 
$\tr(O_m[H_{JC},\rho(0)]_k)=\tr(O_m[\bar{H},\rho(0)]_k)$ since $\tr(O_m[H_{JC},\rho(0)]_k)=(-1)^k \tr(\rho(0)[H_{JC},O_m]_k)$ and $[H_l,O_m]=0$ if $l\not\in m$. 
Importantly, the reduced density matrix of the field $\rho_f^m$ is a Gaussian state on $m$ modes, as argued in \cref{sec:n-mode case}.

We next use the property that $[\bar{H},\bar{\rho}(0)]_k=\sum_{j=0}^k\binom{k}{j}(-1)^j\bar{H}^{k-j}\bar{\rho}(0)\bar{H}^j$ as well as the multinomial theorem on $\bar{H}^{k-j}$ and $\bar{H}^j$ to yield
\begin{align}
\begin{split}
    \tr(O_m\rho(t)) &= \sum_{k=0}^\infty\frac{(-it)^k}{k!}\sum_{j=0}^k\binom{k}{j}(-1)^j \\ & \ \ \ \ \ \ \tr\Big{(}O_m\sum_{\sum a_l=k-j}\binom{k-j}{a_1,...,a_m}\Big{(}\prod_{l\in m}H_l^{a_l}\Big{)}\bar{\rho}(0)\sum_{\sum b_i=j}\binom{j}{b_1,...,b_m}\Big{(}\prod_{i\in m}H_i^{b_i}\Big{)}\Big{)} \\
    &= \sum_{k=0}^\infty\sum_{j=0}^k\sum_{\sum a_l=k-j}\sum_{\sum b_i=j}\frac{(-it)^k}{k!}(-1)^j\binom{k}{j}\binom{k-j}{a_1,...,a_m}\binom{j}{b_1,...,b_m} \\ & \ \ \ \ \ \ \tr\Big{(}O_m\Big{(}\prod_{l\in m}H_l^{a_l}\Big{)}\bar{\rho}(0)\Big{(}\prod_{i\in m}H_i^{b_i}\Big{)}\Big{)}. \\
\end{split}
\end{align}
Applying the binomial theorem to $H_l^{a_l}$ gives
\begin{align}
\begin{split}
    H_l^{a_l} &= \bar{g}_l^{a_l}\sum_{\mu_l=0}^{a_l}\binom{a_l}{\mu_l}(Q_lX_l)^{a_l-\mu_l}(-P_lY_l)^{\mu_l} = \bar{g}_l^{a_l}\sum_{\mu_l=0}^{a_l}\binom{a_l}{\mu_l}Q_l^{a_l-\mu_l}P_l^{\mu_l}R_l \\
\end{split}
\end{align}
where $R_l\in\{\pm1,\pm i\}\times\{\mathds{1},X_l,Y_l,Z_l\}$. This and the likewise expansion for $H_i^{b_i}$ allow us to write the expectation of $O_m$ as
\begin{align}
\begin{split}
    \tr(O_m\rho(t)) &= \sum_{k=0}^\infty\sum_{j=0}^k\sum_{\sum a_l=k-j}\sum_{\sum b_i=j}\sum_{\mu_l=0}^{a_l}\sum_{\beta_i=0}^{b_i}(-it)^k(-1)^j \\ & \ \ \ \ \ \ \tr\Big{(}O_m\Big{(}\prod_{l\in m}\frac{\bar{g}_l^{a_l}}{(a_l-\mu_l)!\mu_l!}Q_l^{a_l-\mu_l}P_l^{\mu_l}R_l\Big{)}\bar{\rho}(0)\Big{(}\prod_{i\in m}\frac{\bar{g}_i^{b_i}}{(b_i-\beta_i)!\beta_i!}Q_i^{b_i-\beta_i}P_i^{\beta_i}R_i\Big{)}\Big{)} \\
\end{split}
\end{align}
where after simplification most factorials cancel out. Evaluating the trace,
\begin{align}\label{Pauli_expectation_final}
\begin{split}
    \tr(O_m\rho(t)) &= \sum_{k=0}^\infty\sum_{j=0}^k\sum_{\sum a_l=k-j}\sum_{\sum b_i=j}\sum_{\mu_l=0}^{a_l}\sum_{\beta_i=0}^{b_i}(-it)^k(-1)^j \\ & \ \ \ \ \ \ \Big\langle\prod_{i\in m}\prod_{l\in m}\frac{\bar{g}_l^{a_l}\bar{g}_i^{b_i}}{(a_l-\mu_l)!\mu_l!(b_i-\beta_i)!\beta_i!}Q_i^{b_i-\beta_i}P_i^{\beta_i}Q_l^{a_l-\mu_l}P_l^{\mu_l}\Big\rangle\Big\langle R_iO_mR_l\Big\rangle
\end{split}
\end{align}
where the expectations are over the initial $|m|$-mode and $|m|$-qubit states, $\rho_f^m(0)$ and $\otimes_{k\in m}\rho_q^m(0)$, respectively. Note that $\expect{R_iO_mR_l}$ is $\mathcal{O}(1)$ in cases of interest to us where $|m|\leq 2$. 

$\expect{\prod_{i\in m}\prod_{l\in m}Q_i^{b_i-\beta_i}P_i^{\beta_i}Q_l^{a_l-\mu_l}P_l^{\mu_l}}$ is an order $k$ moment. Because the state $\rho_f(0)$ is Gaussian, we can use Isserlis' theorem to expand any order $k\geq3$ moment in terms of first- and second-order moments $\bmu$ and $\bsigma$. However since Isserlis' theorem assumes zero-mean Gaussians, we first need to rewrite the raw moments in terms of centralized moments. This leads to an expansion of the order $k$ moment into a sum of $2^{k-1}$ terms of products of (raw) first moments and centralized even moments (there are $2^k$ terms total but all odd central moments are zero). By applying Isserlis' theorem, a centralized even moment of order $k$ is written as a sum of $(k-1)!!$ products of $k/2$ second-order centralized moments \cite{Isserlis1918}.

Putting this together, $\expect{\prod_{i\in m}\prod_{l\in m}Q_i^{b_i-\beta_i}P_i^{\beta_i}Q_l^{a_l-\mu_l}P_l^{\mu_l}}$ in \cref{Pauli_expectation_final} can be written as a sum of no more than $2^{k-1}(k-1)!!$ terms of products of first and second moments contained in $\bmu$ and $\bsigma$ (degrees adding up to $k$). Each of these terms is multiplied by a coefficient of at most $\frac{(gt)^k}{(\lfloor k/{(4|m|)}\rfloor!)^{4|m|}}$. Now consider replacing one first- or second-order moment in one of these terms by estimates that are wrong by $\norm{{\widetilde\bmu}-{\bmu}}_{\infty}$ and $\norm{{\widetilde\bsigma}-{\bsigma}}_{\infty}$. This perturbs an order $k$ term by at most $\mathcal{O}\Big{(}2^{k-1}(2E_{\rm max})^{(k-1)/2}\norm{{\widetilde\bmu}-{\bmu}}_{\infty}+2^{k-2}(2E_{\rm max})^{(k-2)/2}\norm{{\widetilde\bsigma}-{\bsigma}}_{\infty}\Big{)}$, since we can upper bound both the first moments and their estimates by $2\sqrt{2E_{\rm max}}$ and the second moments and their estimates by $4E_{\rm max}$. Summing over all these terms, we can bound the difference of the higher order terms in the expression for a Pauli expectation in terms of the differences in Gaussian moments as
\begin{equation}
\begin{split}
    \norm{f({\widetilde\bgamma})-f({\bgamma})}_\infty &\leq \sum_{k=3}^{\infty}\frac{\mathcal{O}(gt)^kk!!}{(\lfloor k/{4|m|}\rfloor!)^{4|m|}} \Big{(}E_{\rm max}^{(k-1)/2}\norm{{\widetilde\bmu}-{\bmu}}_{\infty}+E_{\rm max}^{(k-2)/2}\norm{{\widetilde\bsigma}-{\bsigma}}_{\infty}\Big{)} \\
    &=\mathcal{O}\Big{(}(gt)^3\Big{(}E_{\rm max}\norm{{\widetilde\bmu}-{\bmu}}_\infty +\sqrt{E_{\rm max}}\norm{{\widetilde\bsigma}-{\bsigma}}_\infty \Big{)}\Big{)}
\end{split}
\end{equation}
if we set $gt\leq\frac{1}{C\sqrt{E_{\rm max}}}$ for sufficiently large $C$.
\end{proof}

Given Lemmas \ref{lemma1} and \ref{lemma2}, we can now evaluate the error bound in \cref{errorbound_iteration_r}. 

\begin{theorem}\label{epsilon_scaling}
    There are choices of $gt=\Theta(\frac{1}{\sqrt{E_{\rm max}}})$ and $\epsilon'=\Theta(\frac{\epsilon}{E_{\rm max}})$ such that running \cref{alg:alg} for $R=\log{\frac{E_{\rm max}}{\epsilon}}$ returns $\widetilde\bgamma$ with $\norm{{\widetilde\bmu}-{\bmu}}_\infty\leq \frac{\epsilon}{\sqrt{E_{\rm max}}}$ and $\norm{{\widetilde\bsigma}-{\bsigma}}_\infty\leq\epsilon$.
\end{theorem}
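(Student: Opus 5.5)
The plan is to track the two error components $e^\mu_r \equiv \norm{\widetilde\bmu_r-\bmu}_\infty$ and $e^\sigma_r \equiv \norm{\widetilde\bsigma_r-\bsigma}_\infty$ separately. We will show that, in suitably weighted units, they contract geometrically towards a floor set by $\epsilon'$. The natural weighted error is
\begin{equation*}
D_r \equiv \sqrt{E_{\rm max}}\, e^\mu_r + e^\sigma_r ,
\end{equation*}
so the target statement is $D_R \lesssim \epsilon$. Applying the splitting of Lemma \ref{lemma1} to \cref{errorbound_iteration_r} gives
\begin{equation*}
e^\mu_r \le \norm{\mathcal A}_{\infty,\infty}\bigl(\epsilon' + \norm{f(\widetilde\bgamma_{r-1})-f(\bgamma)}_\infty\bigr), \qquad e^\sigma_r \le \norm{\mathcal B}_{\infty,\infty}\bigl(\epsilon' + \norm{f(\widetilde\bgamma_{r-1})-f(\bgamma)}_\infty\bigr).
\end{equation*}
Inserting Lemma \ref{lemma2} then yields
\begin{equation*}
e^\mu_r \le \frac{c}{gt}\,\epsilon' + c\,(gt)^2 \sqrt{E_{\rm max}}\, D_{r-1}, \qquad e^\sigma_r \le \frac{c}{(gt)^2}\,\epsilon' + c\, gt\, \sqrt{E_{\rm max}}\, D_{r-1}.
\end{equation*}
Now set $gt = 1/(C'\sqrt{E_{\rm max}})$ with $C' \ge C$ large. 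Then $(gt)^2 E_{\rm max} = 1/C'^2$ and $gt\sqrt{E_{\rm max}} = 1/C'$, so
\begin{equation*}
D_r \le c''\, E_{\rm max}\,\epsilon' + \tfrac{1}{2} D_{r-1}
\end{equation*}
for $C'$ large enough. Choosing $\epsilon' = \epsilon/(4c'' E_{\rm max})$ makes the floor at most $\epsilon/2$, and hence $D_r \le \epsilon/2 + 2^{-r} D_0$.

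Next we bound the initial error $D_0$. By \cref{errorbound_iteration_zero} together with $\norm{f(\bgamma)}_\infty = \mathcal O((gt)^3 E_{\rm max}^{3/2})$, which is the same Taylor-tail argument used in Lemma \ref{lemma2}, we get
\begin{equation*}
e^\mu_0 = \mathcal O(E_{\rm max}/C'^3), \qquad e^\sigma_0 = \mathcal O(E_{\rm max}/C'),
\end{equation*}
so $D_0 = \mathcal O(E_{\rm max})$. After $R=\log_2(E_{\rm max}/\epsilon)+\mathcal O(1)$ iterations we therefore have $D_R \le \epsilon$. Since $D_R \ge \sqrt{E_{\rm max}}\,e^\mu_R$ and $D_R \ge e^\sigma_R$, this gives exactly the two claimed bounds, $e^\mu_R \le \epsilon/\sqrt{E_{\rm max}}$ and $e^\sigma_R \le \epsilon$. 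Constants absorbed into $R$ and $\epsilon'$ are harmless because the statement only asks for $\Theta(\cdot)$ choices.

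The main obstacle is that Lemma \ref{lemma2} applies only while the iterates stay in the domain $\norm{\widetilde\bmu_r}_\infty<2\sqrt{2E_{\rm max}}$ and $\norm{\widetilde\bsigma_r}_\infty<4E_{\rm max}$. We plan to handle this inductively. The true moments satisfy $\norm{\bmu}_\infty\le\sqrt{2E_{\rm max}}$ and $\norm{\bsigma}_\infty\le 2E_{\rm max}$, so it suffices to keep $e^\mu_r<\sqrt{2E_{\rm max}}$ and $e^\sigma_r<2E_{\rm max}$ for every $r$.
\begin{itemize}
\item At $r=0$ these hold once $C'$ is large, by the bounds on $e^\mu_0$ and $e^\sigma_0$ above. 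This uses a bound on $f(\bgamma)$ itself rather than on a difference, and it relies on the Taylor tail converging, which the same regime $gt\lesssim 1/\sqrt{E_{\rm max}}$ ensures.
\item For $r\ge1$, the recursion gives $D_r\le \max(D_0,\epsilon)\le$ a small constant times $E_{\rm max}$, assuming $\epsilon\lesssim E_{\rm max}$; otherwise the result is trivial.
\item This bound on $D_r$ keeps $e^\sigma_r$ inside the domain. For the means we need the sharper bound $e^\mu_r \le D_r/\sqrt{E_{\rm max}} \le$ a small constant times $\sqrt{E_{\rm max}}$, which again stays inside the domain.
\end{itemize}
This closes the induction and completes the argument.
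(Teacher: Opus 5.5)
Your proposal is correct and is essentially the paper's argument: the same Lemma~\ref{lemma1}/Lemma~\ref{lemma2} recursion with $gt=\Theta(1/\sqrt{E_{\rm max}})$, $\epsilon'=\Theta(\epsilon/E_{\rm max})$ and halving of the error each round, with your weighted quantity $D_r=\sqrt{E_{\rm max}}\,e^\mu_r+e^\sigma_r$ merely repackaging the paper's coupled two-component induction, and your domain check for Lemma~\ref{lemma2} spelling out what the paper leaves implicit. One arithmetic slip: the initial means error is $e^\mu_0=\mathcal{O}(\sqrt{E_{\rm max}}/C'^2)$ plus the $\epsilon'$ term, not $\mathcal{O}(E_{\rm max}/C'^3)$; your stated value is dimensionally inconsistent with $D_0=\mathcal{O}(E_{\rm max})$, and it is the corrected value that gives both $D_0=\mathcal{O}(E_{\rm max})$ and the $r=0$ domain condition.
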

\begin{proof}
The bounds in Lemma \ref{lemma1} motivate us to split analysis of the error in estimating $\bgamma$ into two parts corresponding to the means $\bmu$ and covariances $\bsigma$ since their estimation errors scale differently. Similar to \cref{errorbound_iteration_r}, the error bounds for these two components are:
\begin{align}
    \norm{\widetilde{\bmu}_r - \bmu}_\infty &\leq \norm{\mathcal{A}}_{\infty,\infty}\left(\norm{\widetilde{\bo}-\bo}_{\infty} + \norm{f_\bmu(\widetilde{\bgamma}_{r-1}) - f_\bmu(\bgamma)}_{\infty}\right) \nn \\
    &\leq \norm{\mathcal{A}}_{\infty,\infty}\left(\norm{\widetilde{\bo}-\bo}_{\infty} + \norm{f(\widetilde{\bgamma}_{r-1}) - f(\bgamma)}_{\infty}\right), \nn \\
    \norm{\widetilde{\bsigma}_r - \bsigma}_\infty &\leq \norm{\mathcal{B}}_{\infty,\infty}\left(\norm{\widetilde{\bo}-\bo}_{\infty} + \norm{f_\bsigma(\widetilde{\bgamma}_{r-1}) - f_\bsigma(\bgamma)}_{\infty}\right) \nn \\
    &\leq \norm{\mathcal{B}}_{\infty,\infty}\left(\norm{\widetilde{\bo}-\bo}_{\infty} + \norm{f(\widetilde{\bgamma}_{r-1}) - f(\bgamma)}_{\infty}\right),
\end{align}
where $f_{\bmu} (f_\bsigma)$ are the first four (last ten) elements of the vector $f$ that contains the non-linear correction terms. The above expressions are also valid for $r=0$ if we set $\widetilde{\bgamma}_{-1}=0$ since $f(0)=0$.  
Now, we can choose $gt=\Theta(\frac{1}{\sqrt{E_{\rm max}}})$ and $\epsilon'=\Theta(\frac{\epsilon}{E_{\rm max}})$ such that, by applying Lemmas \ref{lemma1} and \ref{lemma2}, we have
\begin{align}
    \norm{\widetilde{\bmu}_r - \bmu}_\infty &\leq \frac{\epsilon}{4\sqrt{E_{\rm max}}} + \frac{1}{4}\left(\norm{\bmu_{r-1} - \bmu}_{\infty} + \frac{1}{\sqrt{E_{\rm max}}}\norm{\widetilde{\bsigma}_{r-1} - \bsigma}_\infty\right), \nn \\
    \norm{\widetilde{\bsigma}_r - \bsigma}_\infty &\leq \frac{\epsilon}{4} + \frac{1}{4}\left( \sqrt{E_{\rm max}}\norm{\widetilde{\bmu}_{r-1} - \bmu}_\infty + \norm{\widetilde{\bsigma}_{r-1} - \bsigma}_\infty\right)
\end{align}

Next we prove by induction that $\norm{{\widetilde\bmu}_r-{\bmu}}_\infty \leq \epsilon/(2\sqrt{E_{\rm max}})+2^{-r}\sqrt{E_{\rm max}}$ and $\norm{{\widetilde\bsigma}_r-{\bsigma}}_\infty\leq\epsilon/2+2^{-r}E_{\rm max}$ for all integers $r\geq0$.

First we evaluate for $r=0$. Recall that $\bgamma_{-1}=0$, and using the bounds on means and covariance matrix elements, \cref{covariance_bound,mean_bound}, we get:
\begin{align}
    \norm{\widetilde{\bmu}_0 - \bmu}_\infty &\leq \frac{\epsilon}{4\sqrt{E_{\rm max}}} + \left(\frac{2+\sqrt{2}}{4}\right)\sqrt{E_{\rm max}} \nn \\
    \norm{\widetilde{\bsigma}_0 - \bsigma}_\infty &\leq \frac{\epsilon}{4} + \left(\frac{2+\sqrt{2}}{4}\right)E_{\rm max}
\end{align}
Now assume the bounds hold for $r$, and consider the $r+1$ case:
\begin{align}
    \norm{\widetilde{\bmu}_{r+1} - \bmu}_\infty &\leq \frac{\epsilon}{4\sqrt{E_{\rm max}}} + \frac{1}{4}\left(\norm{\bmu_{r} - \bmu}_{\infty} + \frac{1}{\sqrt{E_{\rm max}}}\norm{\widetilde{\bsigma}_{r} - \bsigma}_\infty\right) \nn \\
    &\leq\frac{\epsilon}{2\sqrt{E_{\rm max}}}+2^{-(r+1)}\sqrt{E_{\rm max}},
\end{align}
where we have used the assumption to obtain the second inequality. Similarly, 
\begin{align}
    \norm{\widetilde{\bsigma}_{r+1} - \bsigma}_\infty &\leq \frac{\epsilon}{4} + \frac{1}{4}\left( \sqrt{E_{\rm max}}\norm{\widetilde{\bmu}_{r} - \bmu}_\infty + \norm{\widetilde{\bsigma}_{r} - \bsigma}_\infty\right) \nn\\
    &\leq \frac{\epsilon}{2} + 2^{-(r+1)}E_{\rm max},
\end{align}
as needed. Given these bounds on the iterative estimates, setting $R=\log{(E_{\rm max}/\epsilon)}$ completes the proof of \cref{epsilon_scaling}.
\end{proof}

To demonstrate the validity of \cref{epsilon_scaling}, we simulate the transduction of an arbitrary two-mode Gaussian state and run \cref{alg:alg} on the simulated Pauli expectation measurement data to show how the error quickly diminishes with subsequent iterations. A general Gaussian state can be formed by squeezing then displacing a thermal state via $\rho_f=DS\rho_{\rm th}S^{\dagger}D^{\dagger}$, where $\rho_{\rm th}$ is a thermal state, $S$ is a squeezing operator, and $D$ is a displacement operator. For the two-mode case: the thermal state is $\rho_{\rm th}=\rho_{{\rm th},1}(\bar{n}_1)\otimes\rho_{{\rm th},2}(\bar{n}_2)$ where $\rho_{{\rm th},j}(\bar{n}_j)=\sum_{n_j=0}^{\infty}P_{n_j} \ketbra{n_j}{n_j}$ with $P_{n_j}=\frac{\bar{n}_j^{n_j}}{(\bar{n}_j+1)^{n_j+1}}$ and thermal occupation number $\bar{n}_j$, the squeezing operator is $S=S(z_{12})S(z_{22})S(z_{11})$ where $S(z_{jk})=e^{(z_{jk}^*a_ja_k-z_{jk}a_j^{\dagger}a_k^{\dagger})/2}$, and the displacement operator is $D=D(\alpha_2)D(\alpha_1)$ where $D(\alpha_j)=e^{\alpha_ja_j^{\dagger} - \alpha_j^*a_j}$. We choose arbitrary non-zero values for all seven parameters, $\bar{n}_1, \bar{n}_2, z_{11}, z_{22}, z_{12}, \alpha_1$, and $\alpha_2$, to generate an arbitrary two-mode Gaussian state $\rho_f$ in the Fock basis up to some Fock number truncation. We then simulate the optical-to-qubit transduction of $\rho_f$ via the JC interaction with two distinct initial states of two-qubits, as outlined in \cref{sec:2-mode case}. For this we choose $gt=0.01$. This enforces the bound on $gt$ from Lemma \ref{lemma2} such that \cref{epsilon_scaling} is valid for any $E_{\rm max}$ on the order of $\lesssim1000$'s per mode. The simulated Pauli measurement data $\widetilde{\bo}$ that this generates is then fed into \cref{alg:alg} to produce Gaussian moment estimates. In \cref{fig:simulation}, we plot the maximum Gaussian moment estimation error, $\norm{\widetilde{\bgamma} - \bgamma}_\infty$, as a function of algorithm iteration, $r$, showing that the algorithm converges rapidly for general two-mode Gaussian states.
\begin{figure}
    \centering
    \includegraphics[width=0.5\linewidth]{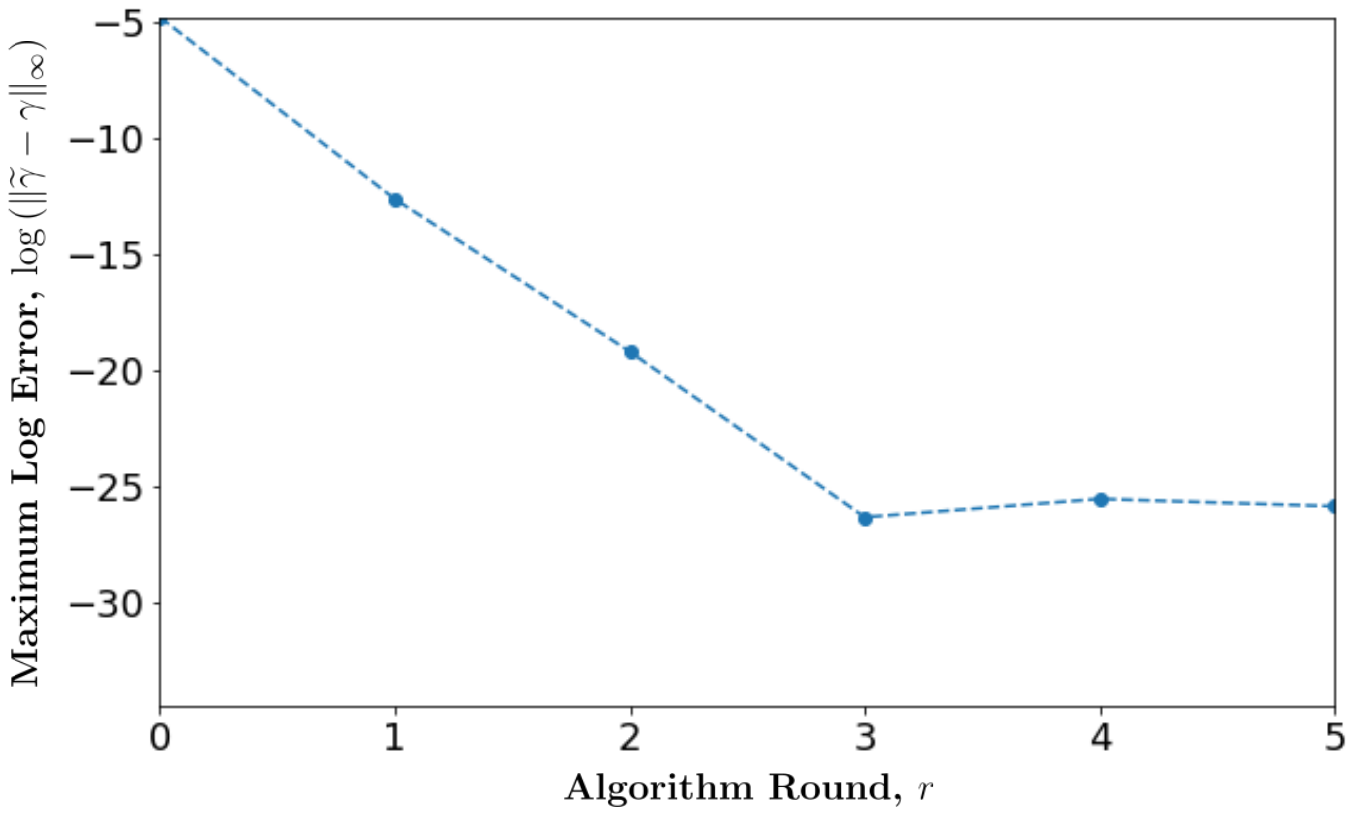}
    \caption{Plot of the maximum Gaussian parameter estimation error, $\norm{\widetilde{\bgamma} - \bgamma}_\infty$, versus the algorithm round, $r$, for a two-mode Gaussian state with unitary parameter values of $\bar{n}_1=0.3, \bar{n}_2=0.5, z_{11}=0.2, z_{22}=0.3, z_{12}=0.4, \alpha_1=\frac{3(1-4i)}{10\sqrt{2}}$, and $\alpha_2=\frac{1(3+2i)}{5\sqrt{2}}$. A Fock number truncation of 60 is chosen for each mode and $gt=0.01$ for the transduction.}
    \label{fig:simulation}
\end{figure}

We are now ready to prove the main result of this paper, \cref{main_result}, restated here for convenience.

\begin{manualtheorem}{1}
With probability $1-\delta$, optical-to-qubit transduction followed by one- and two-qubit Pauli measurements can estimate each of the $M=2n^2+3n$ means and covariances of an $n$-mode optical state to additive error $\epsilon$ using 
\begin{align}
\label{eq:T_final}
    T=\mathcal{O}\Big{(}\frac{E_{\rm max}^2 \log(n/\delta)\lceil\log n \rceil}{\epsilon^2}\Big{)}
\end{align} 
copies, where $E_{\rm max}$ is the maximum energy per mode.
\end{manualtheorem}

\begin{proof}
The $M$ parameters are estimated by combining the estimates of all pairwise moments. According to \cref{epsilon_scaling} all pairwise moments can be estimated with maximum error $\epsilon$ by choosing JC interaction time $gt=\Theta(\frac{1}{\sqrt{E_{\rm max}}})$, running the iterative estimation algorithm with $R=\log \frac{E_{\rm max}}{\epsilon}$, and setting $\epsilon' = \norm{\widetilde{\bo} - \bo}_\infty = \Theta(\frac{\epsilon}{E_{\rm max}})$. This last setting bounds the error on the one- and two-qubit Pauli measurements performed on the $n$-qubit register after transduction. We perform these measurements using the classical shadows protocol, which guarantees with probability $1-\delta$ that $B$ $k$-local Pauli observables can be measured to additive error $\epsilon'$ using $\Theta(3^k\log{(B/\delta)}/\epsilon'^2)$ copies of a state \cite{huangPredictingManyProperties2020, Huang2021-xl}. In our case $k=2$ and $B=9 {n \choose 2} + 3n = \frac{9n(n-1)}{2} + 3n$ since this is the number of one- and two-qubit observables for an $n$-qubit register. Using these parameters and the choice of $\epsilon'$ needed from \cref{epsilon_scaling} yields $T = \mathcal{O}\left(\frac{E_{\max}^2 \log(n/\delta)}{\epsilon^2}\right)$ copies of the state. The remaining factor in \cref{eq:T_final} is from the fact that according to Lemma \ref{n-mode initial states} $\lceil \log_2 n \rceil+1$ initial states of the qubit register are required to ensure that there are sufficient Pauli observables to form an invertible linear system that recovers the Gaussian moments for every pair of qubits.  
\end{proof}

We conclude this section by formalizing the complete QCIS protocol as developed in \cref{sec:transduction,sec:complexity} as \cref{alg2}.

\begin{center}
    \begin{algorithm}
        \begin{algorithmic}[1]
            \Require $T$ copies of an $n$-mode Gaussian state $\rho_f$ with single-mode energy bounded by $E_{\rm max}$.
            \State Let $T'=\frac{T}{\lceil \log_2 n \rceil+1}$.
            \State Initialize the $n$-qubit state to $\rho_q(0)=\ketbra{g}{g}^{\otimes n}$.
             \For{$r=1,...,T'$}
                \State Transduce a copy of $\rho_f$ into the $n$-qubit register via the Jaynes-Cummings interaction with an interaction time $gt=\Theta(\frac{1}{\sqrt{E_{\rm max}}})$ as defined in \cref{epsilon_scaling}.
                \State Perform measurements on the $n$-qubit transduced state $\rho_q(t)$ as prescribed by the classical shadows protocol.
             \EndFor
             \State Construct the classical shadow of size $T'$ from the measurement data in step 5 and estimate the one- and two-qubit Pauli observables that form the vector $\widetilde\bo$ for every pair of qubits. The observables required are listed in \cref{Pauli_ordering}.
             \For{each of the $\lceil \log_2 n \rceil$ states described in Lemma \ref{n-mode initial states} (\eg $\rho_q(0)=\ketbra{+}{+}^{\otimes n/2}\ketbra{+i}{+i}^{\otimes n/2}$)}
              \State Prepare the initial $n$-qubit state.
              \State Repeat steps $3-7$.
               \EndFor
               \State For each pair of qubits, construct $\widetilde\bo$ from the appropriate Pauli observable estimates and execute \cref{alg:alg}.
               \State Combine Gaussian parameter estimates from each pair, averaging over repeated estimates, and output all $2n^2+3n$ Gaussian moment estimates of $\rho_f$.
        \end{algorithmic}
        \caption{The QCIS protocol for estimating all parameters of an $n$-mode Gaussian state. \label{alg2}}
    \end{algorithm}
\end{center}

\section{Sample complexity of Gaussian state learning via CV classical shadows}
\label{sec:CV shadows}

We now restate the sample complexity bounds for CV classical shadows from \cite{gandhariPrecisionBoundsContinuousVariable2024,beckerClassicalShadowTomography2024} in terms of our task, \cref{task}, in order to compare directly with our main result, \cref{main_result}. These all-optical strategies use randomized Gaussian operations and single-copy local measurements such as homodyne or photon-number resolving detection.

\begin{theorem}\label{CV shadows theorem}
[Via Corollary 2 of Gandhari \etal\ \cite{gandhariPrecisionBoundsContinuousVariable2024}] CV shadow tomography can estimate the $M=2n^2+3n$ means and covariances of an n-mode optical state to additive error $\epsilon$ with probability $1-\delta$ using $T=\mathcal{O}\Big{(}\frac{\max_{i=1}^M||O_i||_{\infty}^2 E_{\rm max}^{8}\log(n/\delta)}{\epsilon^2}\Big{)}$ copies of the state, where $O_i$ are the observables corresponding to the Gaussian moments.
\end{theorem}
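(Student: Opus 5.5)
The plan is to derive this directly from Corollary 2 of Gandhari \etal, which states that CV classical shadows with randomized Gaussian unitaries and single-copy homodyne or photon-number-resolving measurements estimate $M$ bounded observables $O_1,\dots,O_M$ to additive error $\epsilon$ with probability $1-\delta$. The corollary gives this for a number of copies scaling as $\log(M/\delta)/\epsilon^2$ times a shadow-norm (variance) factor. For energy-constrained states, that factor is controlled by $\max_i\norm{O_i}_\infty^2$ times a polynomial in the per-mode energy. The work therefore reduces to three steps. First, cast our task in \cref{task} in their language. Second, identify the observables. Third, track the energy dependence and the logarithmic factor.

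First, I would take the observables to be the $M=2n^2+3n$ moment operators $\bR_j$ and $\{\bR_j,\bR_k\}/2$, acting on at most two modes. These are unbounded, so I would follow Gandhari \etal\ and work with their energy-truncated versions. Each is projected onto the subspace of at most $\mathcal{O}(E_{\rm max})$ photons per relevant mode, which gives finite operator norm $\norm{O_i}_\infty$. The error introduced by truncation is then controlled by the energy constraint $\tr(\rho E_j)\le E_{\rm max}$ via Markov/Cauchy–Schwarz, as in \cref{covariance_bound,mean_bound}. Since we keep $\max_i\norm{O_i}_\infty^2$ explicit in the statement, I only need the truncation to be chosen so that its bias is at most $\epsilon/2$. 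The statistical error is then driven to $\epsilon/2$ as well.

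Second, I would invoke the corollary's variance bound. For observables supported on $k\le 2$ modes, the shadow norm of the $k$-local reconstruction is bounded by $\norm{O_i}_\infty^2$ times the dimension-dependent factor of the truncated local Hilbert space raised to the locality. With a cutoff of order $E_{\rm max}$ per mode and two modes, this factor becomes $E_{\rm max}^{8}$, up to constants. This exponent bookkeeping, translating their cutoff-dependent shadow-norm bound into an explicit power of $E_{\rm max}$ under our per-mode energy convention, is the step I expect to be the main obstacle. I would carry it out by matching their cutoff parameter to $E_{\rm max}$ and checking the locality exponent. Third, median-of-means plus a union bound over the $M$ observables gives $\log(M/\delta)=\mathcal{O}(\log(n/\delta))$, since $M=\mathcal{O}(n^2)$.

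Combining these steps yields $T=\mathcal{O}\big(\max_i\norm{O_i}_\infty^2E_{\rm max}^8\log(n/\delta)/\epsilon^2\big)$. The error guarantee holds simultaneously for all $M$ moments, which is exactly \cref{task}.
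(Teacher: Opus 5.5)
Your route matches the paper's: cite Corollary~2 of Gandhari \etal, note that every moment observable is at most $2$-local, take the single-mode Fock cutoff $N=\Theta(E_{\rm max})$, and let the union bound give $\log(M/\delta)=\mathcal{O}(\log(n/\delta))$.

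\textbf{The gap: the exponent $8$.} This is the one step with real content, and you assert it while attributing all of it to the truncated Hilbert-space dimension. In the form the paper uses (Eq.~(34) of Gandhari \etal), the bound is
$T=\mathcal{O}\big(\max_i\norm{O_i}_\infty^2N^{2k_i}\nu_1^{2k_i}(k_i\log(2N)+\log(M/\delta))/\epsilon^2\big)$.
\begin{itemize}
\item With $k_i\le 2$ and $N=\Theta(E_{\rm max})$, the dimension factor contributes only $N^{4}=\Theta(E_{\rm max}^4)$.
\item The remaining $E_{\rm max}^4$ comes from $\nu_1^{2k_i}=\nu_1^4$. Here $\nu_1^2$ is the single-mode variance of the shadow estimator.
\item The paper bounds $\nu_1^2=\mathcal{O}(E_{\rm max}^2)$ by expanding the variance with Isserlis' theorem and applying the moment bounds \cref{covariance_bound,mean_bound}. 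This is where Gaussianity enters.
\end{itemize}
Your proposal never identifies this variance factor. So ``matching the cutoff parameter to $E_{\rm max}$ and checking the locality exponent'' would leave you at $E_{\rm max}^4$ times an unaccounted variance term, not $E_{\rm max}^8$.

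\textbf{Two smaller points.}
\begin{itemize}
\item The corollary's logarithm is $k_i\log(2N)+\log(M/\delta)$, so an additive $\log(2E_{\rm max})$ survives the union bound. The paper's proof keeps it, and the theorem statement silently absorbs it.
\item Your truncation-bias step does not work as stated. The bias of a second-moment observable involves a tail term like $\sum_{n>N}n\,p_n$. A mean-energy constraint alone cannot make this small at any cutoff, since Markov only controls $\Pr(n>N)$. For example, a state can put weight $E_{\rm max}/(N+1)$ on $\ket{N+1}$. You would need the exponential photon-number tails of Gaussian states, which pushes $N$ up by a logarithmic factor, to roughly $E_{\rm max}\log(E_{\rm max}/\epsilon)$. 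The paper simply takes $N=\Theta(E_{\rm max})$ and does not address truncation bias. So this step is not needed to reproduce its argument, but as you have justified it, it does not hold.
\end{itemize}
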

\begin{proof}
Note the similar form of the bound to \cref{main_result}, which is not surprising considering that the sample complexity in this result and our result ultimately stem from the resources required for classical shadow reconstruction of observables. From \emph{Corollary 2} (Eq. (34)) of \cite{gandhariPrecisionBoundsContinuousVariable2024}, the sample complexity to estimate $M$ multimode observables ${O_1,...,O_M}$ local to ${k_1,...,k_M}$ modes to precision $\epsilon$ with probability of failure bounded by $\delta$ is $T=\mathcal{O}(\max_i||O_i||_{\infty}^2N^{2k_i}\nu_1^{2k_i}(k_i\log(2N)+\log(M/\delta))/\epsilon^2)$.

For Gaussian states, $k_i\leq2$ and $N$ is the single-mode Fock truncation number which can be chosen as $\Theta(E_{\rm max})$ to scale with the maximum single-mode energy. $\nu_1^2$ is the single-mode variance of the estimator, which is also bounded by $\mathcal{O}(E_{\rm max}^2)$. This can be seen by expanding the variance of a two-mode observable via Isserlis' theorem and using the bounds on first and second moments, \eqref{covariance_bound} and \eqref{mean_bound}. Putting this all together, we find that $T=\mathcal{O}(\max_i||O_i||_{\infty}^2 E_{\rm max}^{8}(\log(2E_{\rm max})+\log(n/\delta))/\epsilon^2)$. The term $\max_i||O_i||_{\infty}^2$ is non-trivial to evaluate explicitly but could be approximated (\eg the energy cutoff could be translated into a Fock space upper bound and the observables could be treated in a finite-dimensional Hilbert space). We do not attempt to do this here but simply note that this term cannot have an $n$-dependence because the $M$ observables act on at most two modes. 
\end{proof}

Comparing this result to \cref{main_result}, we see that the QCIS upper bound has the same dependence on $n$ but has a polynomially better dependence on $E_{\rm max}$.

\section{Conclusion} \label{sec:conclusion}
We have established sample complexity bounds on learning all parameters of $n$-mode Gaussian optical states within a QCIS framework whereby the optical fields are transduced to qubits and a quantum-computer-enabled measurement is performed on the qubits. We showed that by using a JC interaction between each field mode and an independent qubit and then performing one- and two-qubit Pauli measurements on the $n$-qubit register using classical shadows, one can recover all Gaussian moments in the original $n$-mode field. We rigorously proved that the worst-case estimation error can be bounded by $\epsilon$ by controlling the JC interaction duration and using an iterative estimation algorithm with a number of samples of the Gaussian state that scales logarithmically in $n$. While this scaling with $n$ is matched by recent continuous-variable classical shadows protocols that only perform optical operations and measurements \cite{gandhariPrecisionBoundsContinuousVariable2024}, our protocol provides a polynomial improvement over these in terms of the sample complexity dependence on $E_{\rm max}$, a bound on the energy per mode. 

The overwhelming majority of low-intensity imaging and sensing applications use Gaussian states of the electromagnetic field. The existence of a CV shadows protocol, whose sample complexity scales as $\mathcal{O}(\log n)$, precludes exponential learning advantages in terms of $n$ for Gaussian states using QCIS. While our work shows that one still can have an advantage in terms of $E_{\rm max}$ using a QCIS scheme, relative to the best-known optical measurement scheme, it is unlikely to be sufficient to motivate the added hardware complexity of QCIS. This rules out a large set of applications for which one might consider QCIS in order to obtain an exponential advantage, and instead shifts focus to more advanced imaging and sensing applications that involve non-Gaussian states of the electromagnetic field. 

There are several directions for extending the work in this paper. First, we did not consider experimental imperfections beyond the statistical error associated to Pauli expectation measurements using classical shadows, since we were interested in establishing fundamental bounds. Experimental imperfections in transduction could be modeled, as was done in \cite{crossmanQuantumComputerenabledReceivers2024}, in future work. Second, we only considered the JC interaction since it is the simplest and most commonly encountered atom-field interaction. Other strategies that transduce a mode into multiple qubits (\eg using the Tavis-Cummings interaction) could provide benefits. While we do not believe the $\mathcal{O}(\log n)$ scaling of sample complexity could be improved, it is possible that the estimation scheme could be simplified with other interactions. 

\begin{acknowledgements}
    This work was supported by the U.S. Department of Energy, Office of Science, Office of Advanced Scientific Computing Research through the EXPRESS Program and the Accelerated Research in Quantum Computing Program MACH-Q Project. Sandia National Laboratories is a multimission laboratory managed and operated by National Technology and Engineering Solutions of Sandia LLC, a wholly owned subsidiary of Honeywell International Inc. for the U.S. Department of Energy’s National Nuclear Security Administration under contract DE-NA0003525. This paper describes objective technical results and analysis. Any subjective views or opinions that might be expressed in the paper do not necessarily represent the views of the U.S. Department of Energy or the United States Government.
\end{acknowledgements}

\bibliography{biblio}

\appendix
\section{Expression for two-qubit transduced state and Gaussian moment estimation}
\label{app:2_qubit_map}
In this Appendix, we explicitly calculate how the $2$-mode, $2$-qubit system evolves under a second-order approximation to the JC interaction, and how the Gaussian moments of the modes can be estimated by inverting the relationship between them and two-qubit Pauli expectations. The evolved composite state is solved via
\begin{align}
\rho(t) &= U_{JC}(t) \rho(0) U\dg_{JC}(t) =\sum_{k=0}^\infty\frac{(-it)^k}{k!}[H_{JC},\rho(0)]_k \\
&= \underbrace{\rho(0) -it[H_{JC}, \rho(0)] - \frac{t^2}{2}[H_{JC},[H_{JC}, \rho(0)]]}_{\widetilde{\rho}(t)} + O(t^3),
\end{align}
where for this two-mode case, $U_{JC}(t)=e^{-itH_{JC}}$, $H_{JC} = \sum_{j=1}^2 \bar{g}_j(Q_jX_j-P_jY_j)$, and $\rho(0) = \rho_f \otimes_{j=1}^2 \rho_q^j(0)$. We calculate the first two nontrivial terms in this expansion for an arbitrary two-mode optical state and two initial states of the qubits: $\ket{\psi_1}\equiv \ket{g}_1\otimes \ket{g}_2$ and $\ket{\psi_2}\equiv\ket{+}_1\otimes \ket{+i}_2$. In each case, the (approximate) time-evolved reduced state of the two-qubits, expressed in terms of a Pauli expansion, takes the form:
\begin{align} \label{2-mode state_gg}
\begin{split}
    \widetilde{\rho}_q\Big{(}\ketbra{g}{g}_1\otimes\ketbra{g}{g}_2\Big{)}
    = \frac{1}{4}\bigg{(}\mathds{1} &+
    2\bar{g}_1t\expect{P_1}X_1 +
    2\bar{g}_1t\expect{Q_1}Y_1 - \Big{(}1-2(\bar{g}_1t)^2\big{(}\expect{Q_1^2}+\expect{P_1^2}-1\big{)}\Big{)}Z_1 \\
    &+
    2\bar{g}_2t\expect{P_2}X_2 +
    2\bar{g}_2t\expect{Q_2}Y_2 - \Big{(}1-2(\bar{g}_2t)^2\big{(}\expect{Q_2^2}+\expect{P_2^2}-1\big{)}\Big{)}Z_2 \\
    &+
    4\bar{g}_1\bar{g}_2t^2\expect{P_1P_2}X_1X_2 + 4\bar{g}_1\bar{g}_2t^2\expect{P_1Q_2}X_1Y_2 - 2\bar{g}_1t\expect{P_1}X_1Z_2 \\
    &+
    4\bar{g}_1\bar{g}_2t^2\expect{Q_1P_2}Y_1X_2 + 4\bar{g}_1\bar{g}_2t^2\expect{Q_1Q_2}Y_1Y_2 - 2\bar{g}_1t\expect{Q_1}Y_1Z_2 \\
    &-
    2\bar{g}_2t\expect{P_2}Z_1X_2 - 2\bar{g}_2t\expect{Q_2}Z_1Y_2 \\
    &+
    \Big{(}1-2(\bar{g}_1t)^2\big{(}\expect{Q_1^2}+\expect{P_1^2}-1\big{)}-2(\bar{g}_2t)^2\big{(}\expect{Q_2^2}+\expect{P_2^2}-1\big{)}\Big{)}Z_1Z_2\bigg{)}
\end{split}
\end{align}

\begin{align} \label{2-mode state_xy}
\begin{split}
    \hspace*{-0.4cm} \widetilde{\rho}_q\Big{(}\ketbra{+}{+}_1\otimes\ket{+i}&\bra{+i}_2\Big{)}
    = \\
    = \frac{1}{4}\bigg{(}\mathds{1} &+ \Big{(}1-2(\bar{g}_1t)^2\expect{P_1^2}\Big{)}X_1 - (\bar{g}_1t)^2\big{(}\expect{Q_1P_1}+\expect{P_1Q_1}\big{)}Y_1 + 2\bar{g}_1t\big{(}\expect{P_1} -\bar{g}_1t\big{)}Z_1 \\
    &-
    (\bar{g}_2t)^2\big{(}\expect{Q_2P_2}+\expect{P_2Q_2}\big{)}X_2 + \Big{(}1-2(\bar{g}_2t)^2\expect{Q_2^2}\Big{)}Y_2 + 2\bar{g}_2t\big{(}\expect{Q_2} -\bar{g}_2t\big{)}Z_2 \\
    &-
    (\bar{g}_2t)^2\big{(}\expect{Q_2P_2}+\expect{P_2Q_2}\big{)}X_1X_2 + \Big{(}1-2(\bar{g}_1t)^2\expect{P_1^2} -2(\bar{g}_2t)^2\expect{Q_2^2}\Big{)}X_1Y_2 \\
    &+
    2\bar{g}_2t\big{(}\expect{Q_2} -\bar{g}_2t\big{)}X_1Z_2 - (\bar{g}_1t)^2\big{(}\expect{Q_1P_1}+\expect{P_1Q_1}\big{)}Y_1Y_2 \\
    &+
    2\bar{g}_1t\big{(}\expect{P_1} -\bar{g}_1t\big{)}Z_1Y_2 +
    4\bar{g}_1\bar{g}_2t^2\expect{P_1Q_2}Z_1Z_2\bigg{)}
\end{split}
\end{align}

Reading off the Bloch vector components from these expressions, we can write 14 relations that express 14 unique two-qubit Pauli expectations in terms of the 14 Gaussian moments of the two-mode state:
\begin{equation} \label{shifted Pauli expvals}
\begin{aligned}
    \expect{X_1}_{\ket{\psi_1}} \approx 2\bar{g}_1t \expect{P_1}, \quad
    \expect{Y_1}_{\ket{\psi_1}} \approx 2\bar{g}_1t \expect{Q_1},& \quad
    \expect{X_2}_{\ket{\psi_1}} \approx 2\bar{g}_2t \expect{P_2}, \quad
    \expect{Y_2}_{\ket{\psi_1}} \approx 2\bar{g}_2t \expect{Q_2} \\
    \expect{Z_1}_{\ket{\psi_1}} + 1 + 2(\bar{g}_1t)^2 \approx 2(\bar{g}_1t)^2\Big{(} \expect{Q_1^2}+ \expect{P_1^2}\Big{)},& \quad
    \expect{Z_2}_{\ket{\psi_1}} + 1 + 2(\bar{g}_2t)^2 \approx 2(\bar{g}_2t)^2\Big{(}\expect{Q_2^2}+\expect{P_2^2}\Big{)} \\
    \expect{X_1}_{\ket{\psi_2}} - 1 \approx -2(\bar{g}_1t)^2 \expect{P_1^2},& \quad
    \expect{Y_2}_{\ket{\psi_2}} - 1 \approx -2(\bar{g}_2t)^2 \expect{Q_2^2} \\
    \expect{Y_1}_{\ket{\psi_2}} \approx -2(\bar{g}_1t)^2 \expect{\frac{Q_1P_1+P_1Q_1}{2}},& \quad
    \expect{X_2}_{\ket{\psi_2}} \approx -2(\bar{g}_2t)^2 \expect{\frac{Q_2P_2+P_2Q_2}{2}} \\
    \expect{X_1X_2}_{\ket{\psi_1}} \approx 4\bar{g}_1\bar{g}_2t^2 \expect{P_1P_2},& \quad
    \expect{X_1Y_2}_{\ket{\psi_1}} \approx 4\bar{g}_1\bar{g}_2t^2 \expect{P_1Q_2} \\
    \expect{Y_1X_2}_{\ket{\psi_1}} \approx 4\bar{g}_1\bar{g}_2t^2 \expect{Q_1P_2},& \quad
    \expect{Y_1Y_2}_{\ket{\psi_1}} \approx 4\bar{g}_1\bar{g}_2t^2 \expect{Q_1Q_2}
\end{aligned}
\end{equation}
where again $\ket{\psi_1}\equiv \ket{g}_1\otimes \ket{g}_2$ and $\ket{\psi_2}\equiv\ket{+}_1\otimes \ket{+i}_2$.
Defining shifted versions of some Pauli expectations,
\begin{align} \label{shifted Paulis}
    \overline{\expect{Z_1}}_{\ket{\psi_1}} &\equiv  \expect{Z_1}_{\ket{\psi_1}} + 1 + 2(\bar{g}_1t)^2 \nn \\
    \overline{\expect{Z_2}}_{\ket{\psi_1}} &\equiv \expect{Z_2}_{\ket{\psi_1}} + 1 + 2(\bar{g}_2t)^2 \nn \\
    \overline{\expect{X_1}}_{\ket{\psi_2}} &\equiv \expect{X_1}_{\ket{\psi_2}} - 1 \nn \\
    \overline{\expect{Y_2}}_{\ket{\psi_2}} &\equiv \expect{Y_2}_{\ket{\psi_2}} - 1,
\end{align}
allows us to express the above affine relationships as a linear relationship,
\begin{align}
    \bo \approx \mathcal{M} \bgamma,
\end{align}
where $\bo$ is a vector of Pauli expectations (some are shifted as above) and $\bgamma$ is a vector of (non-centralized) first- and second-order moments of the two-mode state.  

The explicit form of $\mathcal{M}^{-1}$ is: 
\begin{align*}
\scriptsize
    \mathcal{M}^{-1}=\begin{pmatrix}
        \frac{1}{2\bar{g}_1t} & 0 & 0 & 0 & 0 & 0 & 0 & 0 & 0 & 0 & 0 & 0 & 0 & 0\\
        0 & \frac{1}{2\bar{g}_1t} & 0 & 0 & 0 & 0 & 0 & 0 & 0 & 0 & 0 & 0 & 0 & 0\\
        0 & 0 & \frac{1}{2\bar{g}_2t} & 0 & 0 & 0 & 0 & 0 & 0 & 0 & 0 & 0 & 0 & 0\\
        0 & 0 & 0 & \frac{1}{2\bar{g}_2t} & 0 & 0 & 0 & 0 & 0 & 0 & 0 & 0 & 0 & 0\\
        0 & 0 & 0 & 0 & \frac{1}{2(\bar{g}_1t)^2} & \frac{1}{2(\bar{g}_1t)^2} & 0 & 0 & 0 & 0 & 0 & 0 & 0 & 0\\
        0 & 0 & 0 & 0 & 0 & \frac{-1}{2(\bar{g}_1t)^2} & 0 & 0 & 0 & 0 & 0 & 0 & 0 & 0\\
        0 & 0 & 0 & 0 & 0 & 0 & \frac{-1}{2(\bar{g}_2t)^2} & 0 & 0 & 0 & 0 & 0 & 0 & 0\\
        0 & 0 & 0 & 0 & 0 & 0 & \frac{1}{2(\bar{g}_2t)^2} & \frac{1}{2(\bar{g}_2t)^2} & 0 & 0 & 0 & 0 & 0 & 0\\
        0 & 0 & 0 & 0 & 0 & 0 & 0 & 0 & \frac{-1}{2(\bar{g}_1t)^2} & 0 & 0 & 0 & 0 & 0\\
        0 & 0 & 0 & 0 & 0 & 0 & 0 & 0 & 0 & \frac{-1}{2(\bar{g}_2t)^2} & 0 & 0 & 0 & 0\\
        0 & 0 & 0 & 0 & 0 & 0 & 0 & 0 & 0 & 0 & \frac{1}{4\bar{g}_1\bar{g}_2t^2} & 0 & 0 & 0\\
        0 & 0 & 0 & 0 & 0 & 0 & 0 & 0 & 0 & 0 & 0 & \frac{1}{4\bar{g}_1\bar{g}_2t^2} & 0 & 0\\
        0 & 0 & 0 & 0 & 0 & 0 & 0 & 0 & 0 & 0 & 0 & 0 & \frac{1}{4\bar{g}_1\bar{g}_2t^2} & 0\\
        0 & 0 & 0 & 0 & 0 & 0 & 0 & 0 & 0 & 0 & 0 & 0 & 0 & \frac{1}{4\bar{g}_1\bar{g}_2t^2}
    \end{pmatrix}
\end{align*}
where we have ordered the entries in the Pauli expectation and Gaussian moment vectors as
\begin{align} \label{gamma_ordering}
    {\bf{\gamma}}
    = [&\expect{Q_1},\expect{P_1},\expect{Q_2},\expect{P_2},\expect{Q_1^2},\expect{P_1^2},\expect{Q_2^2},\expect{P_2^2},\expect{(Q_1P_1+P_1Q_1)/2}, \nn\\
    &\expect{(Q_2P_2+P_2Q_2)/2},\expect{Q_1Q_2},\expect{Q_1P_2},\expect{P_1Q_2},\expect{P_1P_2}]^T
\end{align}
and
\begin{align}\label{Pauli_ordering}
\begin{split}
    {\bf{p}}=[&\expect{Y_1}_{\ket{\psi_1}},\expect{X_1}_{\ket{\psi_1}},\expect{Y_2}_{\ket{\psi_1}},\expect{X_2}_{\ket{\psi_1}},\overline{\expect{Z_1}}_{\ket{\psi_1}},\overline{\expect{X_1}}_{\ket{\psi_2}},\overline{\expect{Y_2}}_{\ket{\psi_2}},\overline{\expect{Z_2}}_{\ket{\psi_1}},\expect{Y_1}_{\ket{\psi_2}},\expect{X_2}_{\ket{\psi_2}},\\
    &\expect{Y_1Y_2}_{\ket{\psi_1}},\expect{Y_1X_2}_{\ket{\psi_1}},\expect{X_1Y_2}_{\ket{\psi_1}},\expect{X_1X_2}_{\ket{\psi_1}}]^T.
\end{split}
\end{align}

\end{document}